\documentclass[11pt]{article}

\ifdefined\pdfmapfile
  \pdfmapfile{+cm.map}
  \pdfmapfile{+cmextra.map}
  \pdfmapfile{+symbols.map}
  \pdfmapfile{+rsfs.map}
  \pdfmapfile{+lm.map}
\fi
\usepackage{soul}

\usepackage{lmodern}
\usepackage[leqno]{amsmath}
\usepackage{amssymb}
\usepackage{amsthm}
\usepackage{enumerate}
\usepackage{verbatim}
\usepackage{mathrsfs}
\usepackage{dsfont}
\usepackage{graphicx}
\usepackage{fullpage}
\usepackage{color}

\newcommand{\mr}{\mathrm}

\let\H\undefined
\let\Ex\undefined
\let\Pr\undefined
\newcommand{\H}{\probabilityfont{H}}
\DeclareMathOperator*{\Ex}{\mathds{E}}
\DeclareMathOperator*{\Pr}{\mathds{P}}

\newcommand{\Aut}{\mr{Aut}}

\usepackage{mathtools}

\newcommand{\fieldfont}[1]{\mathbb{#1}}

\newcommand{\F}{\fieldfont{F}}

\newtheoremstyle{theorem-style}
  {}
  {}
  {\slshape}
  {}
  {\bf}
  {.}
  {.5em}
  {}

\newtheorem{thm}{Theorem}[section]

\newtheorem{main-la}[thm]{Main Lemma}

\theoremstyle{definition}
\newtheorem{df}[thm]{Definition}

\theoremstyle{plain}
\newtheorem{theorem}[thm]{Theorem}
\newtheorem{proposition}[thm]{Proposition}
\newtheorem{lemma}[thm]{Lemma}
\newtheorem{corollary}[thm]{Corollary}
\newtheorem{conjecture}[thm]{Conjecture}

\theoremstyle{definition}
\newtheorem{definition}[thm]{Definition}

\theoremstyle{remark}
\newtheorem{remark}[thm]{Remark}

\newcommand{\E}{\Ex}

\newcommand{\Odd}{\textsl{Odd}}

\newcommand{\dist}{\mathrm{dist}}
\newcommand{\ACzero}{\mr{AC}^0}

\def\[#1\]{\begin{align*}#1\end{align*}}

\usepackage[numbers,sort&compress]{natbib}

\definecolor{linkgreen}{RGB}{0,90,65}

\usepackage[
    bookmarks=false,
    colorlinks=true,
    linkcolor=linkgreen,
    citecolor=linkgreen,
    urlcolor=linkgreen
]{hyperref}
\usepackage[nameinlink,noabbrev]{cleveref}
\usepackage{authblk}

\newcommand{\1}{\mathbf 1}

\newcommand{\OR}{\operatorname{OR}}

\newcommand{\XOR}{\mathsf{XOR}}

\newcommand{\PARITY}{\mr{PARITY}}

\DeclareMathOperator{\girth}{girth}

\newcommand{\supp}{\operatorname{supp}}
\newcommand{\Part}{\mathsf{Part}}
\newcommand{\Chop}{\mathsf{Chop}}
\newcommand{\ELEMX}{\mathsf{ELEMX}}
\newcommand{\FtwoELEMX}{\ELEMX_{\F_2}}
\newcommand{\UR}{\mathsf{UR}}

\providecommand{\submissionmode}{0}

\title{On Top-Down and Local Lower Bounds for $\mathrm{AC^0}$ Circuits}

\ifnum\submissionmode=1
  \author{Anonymous authors}
  \date{}
\else
  \author[1,2]{%
    G\"ulce Karde\c{s}%
    \thanks{\href{mailto:gulcekardes@gmail.com}
    {{\texttt{gulcekardes@gmail.com}}}}%
  }
  \author[3]{%
    Benjamin Rossman%
    \thanks{\href{mailto:benjamin.rossman@duke.edu}
    {{\texttt{benjamin.rossman@duke.edu}}}}%
  }
  \affil[1]{University of Colorado Boulder}
  \affil[2]{Santa Fe Institute}
  \affil[3]{Duke University}
  \date{}
\fi

\begin{document}

\maketitle

\begin{abstract}
Classical lower bounds for $\mathrm{AC^0}$ circuits proceed bottom-up by simplifying or approximating gates beginning at the input layer.  We introduce a complementary top-down model called the \emph{Chopping Game}, played by adversaries Spoiler and Duplicator on the sets of $0$- and $1$-inputs of a Boolean function.  In each round, Spoiler keeps at least a $1/m$-fraction of one side, and Duplicator arbitrarily restricts the other; Spoiler seeks to minimize (and Duplicator to maximize) the number of rounds until some coordinate separates the two remaining sets.  Every depth-$d$, fan-in-$m$ circuit induces a $d$-round winning strategy for Spoiler, while Duplicator strategies that survive $d$ rounds formalize top-down lower-bound arguments.  

Through the Chopping Game and using the polynomial-approximation method, we first obtain the classical lower bound for depth-$d$ $\mathrm{AC^0}$ circuits in a top-down fashion.  We then consider a \emph{$k$-local} variant of the Chopping Game, which relaxes Spoiler's win condition by requiring a separating coordinate within each Hamming ball of radius $k$, rather than a single coordinate globally.  We put forward a conjecture that the $d$-round $k$-local Chopping Game for $\mathrm{PARITY}$ requires $m = n^{\omega(1)}$ in the regime $d \ll k \ll n$.  We prove such a lower bound $m \ge n^{\Omega(k^{1/d}/d)}$ when Spoiler is restricted to so-called \emph{affine} strategies, a class of strategies that achieves the best known upper bounds.  Finally, we formulate a version of the $k$-local Chopping Game on $n$-regular graphs of girth $>2k$, and we conjecture a graph-theoretic analogue of ``$\mathrm{PARITY} \notin \mathrm{AC^0}$''.
\end{abstract}

\newpage

\section{Introduction}
\label{sec:introduction}

Essentially all lower bounds for $\ACzero$ circuits (with a few exceptions
for depth at most four, discussed below) are based on two methods introduced
in the 1980s: random restrictions and switching lemmas
\cite{Ajtai1983,FurstSaxeSipser1984,Yao1985,Hastad1986} and low-degree
polynomial approximation \cite{Razborov1987,Smolensky1987}.  Both methods
proceed in a {\em bottom-up} fashion by simplifying or approximating gates beginning at the
input layer.  Notable later developments based on these techniques include the
Linial--Mansour--Nisan theorem on the Fourier concentration of $\ACzero$
functions \cite{LinialMansourNisan1993} and Braverman's theorem that
polylogarithmically independent distributions fool polynomial-size
$\ACzero$ circuits \cite{Braverman2010}.  Nevertheless, 
completely new techniques are needed to make progress on central open problems such as proving
$\textsc{Inner Product}\notin\ACzero\circ\XOR$, or breaking the
$2^{\Omega(n^{1/(d-1)})}$ size lower-bound barrier for depth-$d$ circuits.

One prominent line of work aims to prove novel {\em top-down} lower bounds through the Karchmer--Wigderson (KW) Game characterization of $\ACzero$ circuits \cite{KarchmerWigderson1990}.  In the KW Game for a Boolean function $f$,
Alice and Bob receive a one-input $x\in f^{-1}(1)$ and a zero-input
$y\in f^{-1}(0)$ and must find a coordinate $i$ such that $x_i\ne y_i$.
Depth-$d$, fan-in-$m$ circuits for $f$ are precisely characterized by
$d$-round protocols in which each message takes at most $m$ possible
values.  Such a protocol describes a descent from the output gate of a
circuit toward an input literal, with each message specifying a child that certifies the current value.

A natural top-down approach is to follow, in each round, a message shared by
the largest number of active inputs (i.e.,\ the least informative message).  This
approach has proved successful for monotone circuits, achieving optimal lower bounds on the depth of fan-in $2$ monotone circuits for $st$-Connectivity \cite{KarchmerWigderson1990} and Clique \cite{GoldmannHastad1992}.
For non-monotone $\ACzero$ circuits,
exponential lower bounds by top-down methods were obtained for circuits of depth $3$ \cite{HastadJuknaPudlak1995,MW19} and depth $4$ \cite{GoosRiazanovSofronovaSokolov2023}.

\subsection{The Chopping Game}

We introduce the Chopping Game as a framework 
to formalize this kind of top-down argument.
The initial position in the game is the pair of sets $(A_0,B_0)$ where $A_0$ and $B_0$ are the sets of one-inputs and zero-inputs of a Boolean function $f$.
In an odd round, Spoiler retains at
least a $1/m$ fraction of the surviving $A$-side and Duplicator arbitrarily
prunes the $B$-side; in an even round, Spoiler chops the $B$-side and
Duplicator prunes the $A$-side.  Spoiler wins when one coordinate separates
the two remaining sets.

At a gate in a KW descent, the next message partitions the active side
into at most $m$ parts. Spoiler follows a largest part, while Duplicator
is free to prune the opposite side.  The Chopping Game relaxes this step,
by allowing Spoiler to choose any subset of the same density, without
specifying the other parts of the partition.  It follows that every
depth-$d$, fan-in-$m$ circuit (with output gate OR) induces a $d$-round winning strategy for
Spoiler in the orientation determined by its output gate. Thus, if
Duplicator survives $d$ rounds, then no such circuit
computes $f$. 

The converse fails: Spoiler's freedom to choose a dense set can give a
Chopping strategy which does not arise from any circuit descent (see the
example following Proposition~\ref{prop:partition-chopping}).  A lower bound
for the Chopping Game is therefore stronger than the corresponding circuit
lower bound.

\subsection{Our results and conjectures}

We first use the Chopping Game with the polynomial-method to obtain a lower bound 
in top-down form.  
Our focus then shifts to local versions of the KW and Chopping Games on
the Boolean cube and on regular graphs.

\paragraph{Top-down approximation.}

We use the Chopping Game to give a proof of $\PARITY\notin\ACzero$ through a top-down use of polynomial approximations. For fixed $d\ge2$, 
we directly
construct 
from a $d$-round winning Spoiler strategy for parity 
a probabilistic
polynomial over $\F_3$ of degree $O((\log m)^{d-1})$ that distinguishes
even- and odd-weight strings with constant advantage. The
$\Omega(\sqrt n)$ approximation-degree lower bound for parity over $\F_3$
\cite{Smolensky1987} then 
gives
$m=2^{\Omega(n^{1/(2(d-1))})}$. 
That is, we give a top-down implementation of the classical approximation-method lower bound for parity.

\paragraph{The Distance-$k$ KW Game.}

In the Distance-$k$ KW Game, Alice and Bob receive strings
$x,y\in\{0,1\}^n$ at Hamming distance exactly $k$ and, following a deterministic communication protocol, must find a
coordinate on which they differ.  Note that here the promise on $(x,y)$ is not a
combinatorial rectangle.
When $k$ is odd (and assuming $|x|$ is odd), this is a non-rectangular subproblem of the KW relation for parity.
The Distance-$k$ KW Game also makes sense when $k$ is even, although it is more closely related to the Universal Relation (as we discuss below).

For every $k$, there is an elementary two-round protocol using
at most $(k+1)\lceil\log n\rceil$ bits in total. Alice sends the syndrome of
$x$ with respect to a parity-check matrix of a binary code of distance at
least $2k+1$ \cite{BoseRayChaudhuri60}; using $y$, Bob decodes $x+y$ and
returns any one of its nonzero coordinates. (Note that this upper bound does {\em not} correspond to any depth-$2$ circuit, or indeed any circuit at all.) We prove a
$(k+1)\log n-O_k(1)$ lower bound on the total cost of every two-round protocol.
For every fixed even $k$, we also prove a round-independent lower bound
$\frac{k}{2}\log n-O_k(1)$ on total communication.

We then advance a conjecture about the constant-round complexity of the Distance-$k$ KW Game for fixed $k$, which we believe isolates a lower-bound question that calls for new techniques.

\begin{conjecture}
\label{conj:distance-k-informal}
For all $d \ge 2$ and $d \ll k \ll n$,
every $d$-round $m$-ary protocol for the Distance-$k$
KW Game requires $m=n^{\omega(1)}$.
\end{conjecture}

So far as we know, a lower bound $m = n^{\Omega(k/d)}$ could even be possible (moreover, even for the corresponding Distance-$k$ Chopping Game). Here we modestly conjecture a lower bound on $m$ that is super-polynomial in $n$ for any fixed $d$ as $k$ grows; the quantifiers are stated explicitly in
Conjecture~\ref{conj:distance-k}. Even in this quantitatively weak form, in the case of odd $k$, Conjecture \ref{conj:distance-k-informal} is stronger than
$\PARITY\notin\ACzero$: it asks for a super-polynomial lower bound 
a single fixed-distance
restriction of the KW relation for PARITY.
Nevertheless, the classical $\ACzero$ lower bound techniques do not appear to apply to this question.

In the Distance-$k$ Chopping Game, Spoiler wins when every point on either
surviving side has a coordinate which separates it from all surviving points
on the opposite side at distance exactly $k$.  We also consider a $k$-local
version in which this condition is required for all opposite-side points
within Hamming distance $k$.  We make the same qualitative conjecture
$m=n^{\omega(1)}$ for the more top-down Distance-$k$ Chopping Game.

\paragraph{Linear KW protocols and affine chopping strategies.}

We call a KW protocol {\em linear} if every message consists of $\F_2$-linear
functions of the speaker's input, and we call a Chopping Game strategy {\em affine} if every
chop is an intersection with an affine subset of $\F_2^n$.  
The two-round
upper bound above has a linear variant: Alice and Bob each send the syndrome
of their input, after which both recover $x+y$.  For $d \ge 2$, 
a lower bound in 
\cite{Rossman2017Subspace} shows that every $d$-round $\F_2$-linear protocol
for parity has a message of length at least $n^{1/(d-1)}-1$ bits.  The
standard $d$-round top-down strategy for parity likewise uses affine
chops, with $\log m=O(n^{1/(d-1)})$.

Our main technical result shows that every $d$-round affine
winning strategy in the $k$-local Chopping
Game for parity requires
$m\ge n^{\Omega(k^{1/d}/d)}$.  The best known upper bounds use affine
strategies.  For every $d\ge2$ and infinitely many $k$, we also prove that
every $d$-round $\F_2$-linear protocol for the
Distance-$k$ KW Game has a
message of length at least
$\frac{k^{1/(d-1)}}{d-1}\log n-O_k(1)$ bits.  The proof uses a fixed-core
sunflower bound in a round-elimination argument.

\paragraph{Distance-$k$ Conjecture in high-girth graphs.}

Let $G$ be a finite $n$-regular graph of girth greater than $2k$, so that
vertices at distance $k$ have a unique shortest path.  In the
graph-theoretic Distance-$k$ KW Game, Alice and Bob
receive vertices $x,y$ at distance $k$, and each identifies the neighbor of
their own vertex on the unique shortest $x$-to-$y$ path.  

We formulate a version of the Distance-$k$
Conjecture on such graphs $G$
(Conjecture~\ref{conj:high-girth-distance}), which we view as a
graph-theoretic version of $\PARITY\notin\ACzero$.  An elementary coloring
argument gives a matching two-round upper bound of
$2k\log n+O(1)$ bits per message.  Finiteness matters: after choosing a
public root and local port labels, the corresponding relation on the infinite
$n$-regular tree has a three-round protocol of total cost
$\log n+O(\log k)$. While is no formal reduction between the graph and Hamming cube
conjectures, we believe this new graph-theoretic question cleanly isolates the challenge of proving complexity lower bounds from local principles. Optimistically, a proof of this conjecture could lead to new techniques in the circuit setting.

\subsection{Related work}

The local viewpoint in this paper is closest to the finite-limit method.
The idea goes back to Sipser's infinitary analogy and was developed into a
top-down lower-bound method for depth-three circuits by H{\aa}stad, Jukna,
and Pudl{\'a}k, and into a general finite-limit criterion for monotone
computations by Jukna
\cite{Sipser1984,HastadJuknaPudlak1995,Jukna1997FiniteLimits}.  Meir and
Wigderson later gave an information-theoretic version of the depth-three
argument, and G{\"o}{\"o}s, Riazanov, Sofronova, and Sokolov extended
top-down lower bounds to depth four
\cite{MW19,GoosRiazanovSofronovaSokolov2023}.  A key distinction, however, is the scale of locality. In these works, the relevant locality parameter grows with $n$: the depth-three arguments test local limits on
about $\sqrt n$ coordinates, while the depth-four argument uses Hamming
spheres of radius $n^{1/3}$ and block flips of size $n^{1/3+o(1)}$. Consequently, these methods do not yield nontrivial bounds for the fixed-$k$ Chopping Games studied here.

The Universal Relation asks for a differing coordinate under the sole
promise $x\ne y$.  Tardos and Zwick proved essentially tight
$\Theta(n)$-bit deterministic communication bounds for this problem
\cite{TardosZwick1997}.  The diagonal-transcript argument behind
Theorem~\ref{thm:tarui-even-k} is a fixed-distance analogue of their
lower-bound argument.  By contrast, the $2\lceil\log n\rceil$ protocol used at odd
distance applies only to the opposite-parity restriction of the Universal
Relation.

Certificate Games study zero-communication protocols for KW relations under
several models of shared correlation
\cite{ChakrabortyGalLaplanteMittalSunny2023}.  Our terminal positions instead
require deterministic output maps that are correct on every surviving promise
pair.  We use no quantitative comparison with the Certificate Game
parameters.

\subsection*{Outline of the paper}

Section~\ref{sec:preliminaries} defines the circuit and game models.
Section~\ref{sec:top-down-approximation} gives the polynomial and formula
approximation theorems.  Section~\ref{sec:distance-k} introduces the
Distance-$k$ and $k$-local games, states their main conjectures, and gives
the two-round and even-distance bounds.  Section
\ref{sec:linear-distance-k} proves the lower bounds for linear protocols and
affine Spoiler strategies.  Section~\ref{sec:graph-analogue} treats the
graph-theoretic conjecture, the coloring upper bound, and the infinite-tree
protocol.  Section~\ref{sec:future-directions} states the main open
questions.

\section{Preliminaries}\label{sec:preliminaries}

Throughout, $d\ge0$, $m\ge2$, and $n\ge1$ are integers.  We write
$[n]=\{1,\ldots,n\}$ and take all logarithms to base two.  For
$x\in\{0,1\}^n$, let $|x|$ and $\supp(x)$ denote its Hamming weight and
support; the Hamming distance between $x$ and $y$ is denoted $\dist(x,y)$.
Unless otherwise stated, $d$ and $k$ are fixed while $n$ tends to infinity.
The notation $d\ll k\ll n$ means that $d$ is fixed, $k$ is sufficiently
large in terms of $d$, and $n$ is sufficiently large in terms of $d,k$.
Constants implicit in asymptotic notation may depend on the parameters in
the subscript.

\subsection{Circuits and Karchmer--Wigderson Games}

\begin{definition}[$\ACzero$ circuits]
\label{def:ac0-circuits}
An $\ACzero$ circuit has input gates labeled by constants or literals and
internal gates labeled AND or OR.  Its depth is the maximum number of
internal gates on an input--output path, and its fan-in is the maximum
number of inputs to a gate.  A formula is a circuit in which every nonoutput
gate has fan-out one.  We take circuits and formulas to be alternating.  A
$\Sigma_d$ circuit has depth at most $d$ and an OR output gate; $\Pi_d$
circuits are defined dually.  Literals and constants have depth zero and
belong to both orientations.
\end{definition}

A depth-$d$, fan-in-$m$ circuit can be unfolded into a formula without
increasing either parameter; the resulting formula has at most $m^d$
leaves.  We therefore use fan-in for both circuits and formulas.

For $A,B\subseteq\{0,1\}^n$, a Boolean function \emph{separates} $A$ from
$B$ if it is one on $A$ and zero on $B$.  We regard $(A,B)$ as an ordered
pair, with $A$ as the one-side.  The position is \emph{globally terminal}
if a literal or constant separates $A$ from $B$.  In particular, a position
with an empty side is terminal.

Let $f:\{0,1\}^n\to\{0,1\}$, and put $A=f^{-1}(1)$ and $B=f^{-1}(0)$.
All communication protocols in this paper are deterministic.

\begin{definition}[KW Communication Game]
\label{def:kw-communication}
Alice receives $x\in A$ and Bob receives $y\in B$.  In a $d$-round
protocol they exchange at most $d$ alternating messages, beginning with
Alice.  At the end, both parties output the same coordinate $i\in[n]$ such
that $x_i\ne y_i$.  The protocol is $m$-ary if every message belongs to an
alphabet of size at most $m$, and it is $L$-bit if every message has length
at most $L$.  Its total cost is the maximum number of bits in a transcript.
Thus an $L$-bit, $d$-round protocol has total cost at most $dL$.
\end{definition}

An $m$-ary protocol may be encoded using $\lceil\log m\rceil$ bits per
round; conversely, an $L$-bit message has at most $2^L$ possible values.
When message lengths vary, we require the possible messages at each
transcript to be prefix-free.  Equivalently, the protocol may be expanded
into a binary tree whose maximum root-to-leaf length is its total cost.

\begin{theorem}[Karchmer--Wigderson \cite{KarchmerWigderson1990}]
\label{thm:kw-correspondence}
There is an Alice-first $d$-round $m$-ary protocol for the KW game of $f$
if and only if a $\Sigma_d$ formula of fan-in at most $m$ computes $f$.
The Bob-first version corresponds to $\Pi_d$ formulas.  The same statement
holds with circuits in place of formulas, by unfolding.
\end{theorem}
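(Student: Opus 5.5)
We prove both directions by induction on $d$, using a version of the statement for arbitrary position pairs $(A,B)$ with $A\cap B=\emptyset$. The generalized claim reads: there is an Alice-first $d$-round $m$-ary protocol for the KW relation on $A\times B$ if and only if some $\Sigma_d$ formula of fan-in at most $m$ separates $A$ from $B$. The Bob-first and $\Pi_d$ version is the dual statement, obtained by swapping roles and negating; equivalently, we exchange AND/OR and the sides, using De Morgan on literals. The theorem is the special case $A=f^{-1}(1)$, $B=f^{-1}(0)$.

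\textbf{Base case.} A $0$-round protocol outputs a fixed coordinate $i$ such that $x_i\ne y_i$ for all $x\in A$, $y\in B$. A moment's thought shows this is the same as saying $x_i$ is constant on $A$ and takes the opposite constant value on $B$, provided both sides are nonempty. Hence the literal $x_i$ or $\neg x_i$ separates $A$ from $B$. If a side is empty, we use a constant.

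\textbf{Formula to protocol.} Write $F=\bigvee_{j\le m}G_j$, where each $G_j$ is a $\Pi_{d-1}$ formula. Alice, holding $x\in A$ with $F(x)=1$, sends an index $j$ with $G_j(x)=1$. Bob knows $G_j(y)=0$ because $F(y)=0$. So the players continue on the pair $(A_j,B)$ with $A_j=A\cap G_j^{-1}(1)$. Since $G_j$ separates $A_j$ from $B$, induction in the dual orientation gives a Bob-first $(d-1)$-round protocol for this pair.

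\textbf{Protocol to formula.} Suppose Alice's first message partitions $A=\bigcup_{j\le m}A_j$. After message $j$, the remainder is a Bob-first $(d-1)$-round $m$-ary protocol for $A_j\times B$. Induction gives a $\Pi_{d-1}$ formula $G_j$ separating $A_j$ from $B$. Then $\bigvee_j G_j$ is $1$ on $A$ and $0$ on $B$, so it is a $\Sigma_d$ formula of fan-in at most $m$.

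\textbf{Remaining bookkeeping.} Alternation holds because each induction step switches orientation. If the protocol uses fewer rounds, we pad the formula with single-child gates, which is harmless since literals belong to both orientations. For circuits, a circuit yields a formula of the same depth and fan-in by unfolding, and a formula is already a circuit. The main subtlety is making the induction hypothesis range over arbitrary rectangles $A\times B$, not only over $f^{-1}(1)\times f^{-1}(0)$; the rest is routine.
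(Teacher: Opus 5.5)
Your proof is correct and is essentially the argument the paper relies on: the paper cites Karchmer--Wigderson for this theorem, but its proof of Proposition~\ref{prop:partition-formula} carries out your induction over arbitrary pairs $(A,B)$, partitioning $A$ by a satisfied disjunct in one direction and combining the inductively obtained sub-separators by an OR in the other, with circuits handled by unfolding. Your extra bookkeeping (the base case yielding a literal or constant, padding with single-child gates, and switching orientation by De Morgan) is consistent with the paper's convention that literals and constants belong to both orientations.
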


\subsection{The Partition and Chopping Games}

We next give the KW protocol tree as an adversarial partition game. A position is an
ordered pair $(A,B)$, and the first set is the side acted on in the next
round. The order reverses after every move.

\begin{definition}[Partition Game]
\label{def:partition-game}
Write $\Part_{d,m}(A,B)$ if Spoiler wins the following game in at most $d$
rounds.  A globally terminal position is an immediate win, while a
nonterminal position is a loss when $d=0$.  Otherwise Spoiler partitions
$A=A_1\sqcup\cdots\sqcup A_t$ into at most $m$ nonempty parts, Duplicator
chooses $j\in[t]$, and play continues from $(B,A_j)$.  Equivalently,
\[
 \Part_{d,m}(A,B)
 \quad\Longleftrightarrow\quad
 \exists\ A=A_1\sqcup\cdots\sqcup A_t\ (t\le m)\ \ \forall j\in[t]\quad
 \Part_{d-1,m}(B,A_j)
\]
at every nonterminal position.
\end{definition}

Here Spoiler fixes the message partition at the current node, and Duplicator
chooses the message that is sent.

\begin{proposition}[Partition Game and formulas]
\label{prop:partition-formula}
For all $A,B\subseteq\{0,1\}^n$, $\Part_{d,m}(A,B)$ holds if and only if a
$\Sigma_d$ formula of fan-in at most $m$ separates $A$ from $B$.
\end{proposition}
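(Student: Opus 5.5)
We argue by induction on $d$, proving both directions simultaneously. We take the orientation convention that a $\Sigma_d$ formula separating $A$ from $B$ has an OR gate on top whose children are $\Pi_{d-1}$ formulas. By De Morgan duality, a $\Pi_{d-1}$ formula separating $A_j$ from $B$ is the same as a $\Sigma_{d-1}$ formula (its negation, with literals flipped) separating $B$ from $A_j$. This matches the reversal of the ordered pair in the recursion for $\Part$.

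\emph{Base case and terminal positions.} If $(A,B)$ is globally terminal, then a literal or constant separates $A$ from $B$. Such a formula has depth zero, so it is a $\Sigma_d$ formula for every $d$. Conversely, for $d=0$ a $\Sigma_0$ formula is just a literal or constant, so it separates $A$ from $B$ exactly when the position is globally terminal. This matches the definition of $\Part_{0,m}$.

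\emph{Formula $\Rightarrow$ game.} Let $F=\bigvee_{i\le t}G_i$ with $t\le m$ separate $A$ from $B$ at a nonterminal position, where each $G_i$ is $\Pi_{d-1}$. Every $x\in A$ satisfies some $G_i$. Assign each $x$ to the least such $i$, and discard the empty classes. This gives a partition $A=A_1\sqcup\cdots\sqcup A_{t'}$ with $t'\le m$. Since $F$ vanishes on $B$, each $G_i$ vanishes on $B$, so $G_i$ separates $A_i$ from $B$. By duality, $\neg G_i$ (pushed to the literals) is a $\Sigma_{d-1}$ formula separating $B$ from $A_i$. Induction then gives $\Part_{d-1,m}(B,A_i)$ for every part.

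\emph{Game $\Rightarrow$ formula.} At a nonterminal position, take Spoiler's partition into $A_1,\dots,A_t$ with $t\le m$. By induction, each $(B,A_j)$ has a $\Sigma_{d-1}$ formula $H_j$ of fan-in at most $m$ separating $B$ from $A_j$. Then $\neg H_j$ is a $\Pi_{d-1}$ formula that is $1$ on $A_j$ and $0$ on $B$. Put $F=\bigvee_j \neg H_j$. It is $1$ on all of $A=\bigcup_j A_j$ and $0$ on $B$, and it has fan-in $t\le m$ at the top.

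\emph{Main obstacle: alternation and depth bookkeeping.} Some child $\neg H_j$ may have an AND on top, or may be a literal or constant, or may have an OR on top when it has depth less than $d-1$. In the first two cases we place it directly under the OR. In the last case we merge its OR top gate into the root OR, but this can raise the root fan-in above $m$. To avoid this we use the convention that literals and constants belong to both orientations, and we pad any child with an OR top gate by a trivial fan-in-one AND gate. The padded child is then a genuine $\Pi_{d-1}$ formula, the depth stays at most $d$, and the fan-in stays at most $m$. Finally, the circuit version follows from the unfolding remark.
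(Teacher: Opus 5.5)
Your proof is correct and follows the paper's argument. It uses induction on $d$, builds $\bigvee_j\neg H_j$ for the game-to-formula direction, and for the converse assigns each $x\in A$ to a satisfied top-level child. The padding case in your last paragraph is harmless but vacuous here: the paper defines $\Sigma_{d-1}$ formulas to have an OR output gate, so $\neg H_j$ never has an OR on top.
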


\begin{proof}
The assertion is immediate at depth zero.  Suppose that $d\ge1$ and
$(A,B)$ is nonterminal.  If Spoiler partitions
$A=A_1\sqcup\cdots\sqcup A_t$, then by induction there is a
$\Sigma_{d-1}$ formula $G_j$ separating $B$ from $A_j$ for every $j$.
The formula $\bigvee_{j=1}^t\neg G_j$ separates $A$ from $B$.

Conversely, let $F=\bigvee_{j=1}^t F_j$ be a $\Sigma_d$ formula separating
$A$ from $B$.  Assign each $x\in A$ to an index $j$ for which $F_j(x)=1$.
This partitions $A$ into at most $t\le m$ nonempty parts $A_j$.  Since
$F_j$ is zero on $B$ and one on $A_j$, the formula $\neg F_j$ separates
$B$ from $A_j$.  Induction gives Spoiler's continuation from $(B,A_j)$.
\end{proof}

We now introduce the Chopping Game, obtained by discarding most of the information in a partition move, in order to derive stronger circuit lower bounds.

\begin{definition}[Chopping Game]
\label{def:chopping-game}
Write $\Chop_{d,m}(A,B)$ if Spoiler wins the following game in at most $d$
rounds.  A globally terminal position is an immediate win, while a
nonterminal position is a loss when $d=0$.  Otherwise Spoiler chooses
$A'\subseteq A$ with $|A'|\ge \frac{|A|}{m}$; Duplicator chooses an arbitrary
$B'\subseteq B$; and play continues from $(B',A')$.  Equivalently,
\[
 \Chop_{d,m}(A,B)
 \quad\Longleftrightarrow\quad
 \exists\ A'\subseteq A,\ |A'|\ge \frac{|A|}{m}\ \ \forall B'\subseteq B\quad
 \Chop_{d-1,m}(B',A')
\]
at every nonterminal position.
\end{definition}

Thus Spoiler fixes $A'$ before seeing Duplicator's response.  The same chop
must work for every $B'\subseteq B$, although the continuation from
$(B',A')$ may depend on $B'$.  Several counterfactual responses to one chop
therefore describe different branches of the strategy, not successive moves
in one play.

\begin{proposition}[Partition implies chopping]
\label{prop:partition-chopping}
If $\Part_{d,m}(A,B)$ holds, then so does $\Chop_{d,m}(A,B)$.  Consequently, if
Duplicator survives the $d$-round $m$-Chopping Game from $(A,B)$, then no
$\Sigma_d$ circuit of fan-in at most $m$ separates $A$ from $B$.  If
Duplicator survives from both $(A,B)$ and $(B,A)$, then no alternating
depth-$d$, fan-in-$m$ circuit separates the two sets in either orientation.
\end{proposition}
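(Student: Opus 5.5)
We argue by induction on $d$ that $\Part_{d,m}(A,B)$ implies $\Chop_{d,m}(A,B)$ for all $A,B\subseteq\{0,1\}^n$.

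For $d=0$, both predicates say exactly that $(A,B)$ is globally terminal. More generally, if $(A,B)$ is globally terminal, then both games declare an immediate Spoiler win, so we may assume the position is nonterminal and $d\ge1$.

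In the nonterminal case, take Spoiler's winning partition $A=A_1\sqcup\cdots\sqcup A_t$ with $t\le m$ and $\Part_{d-1,m}(B,A_j)$ for every $j$. By pigeonhole, some part $A_j$ satisfies $|A_j|\ge |A|/t\ge |A|/m$. Spoiler's chop is $A'=A_j$.

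Now let Duplicator respond with an arbitrary $B'\subseteq B$. We need $\Chop_{d-1,m}(B',A_j)$. By induction, $\Chop_{d-1,m}(B,A_j)$ holds, so it suffices to prove a monotonicity lemma.

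\emph{Monotonicity lemma.} If $\Chop_{e,m}(C,D)$ holds and $C_0\subseteq C$, $D_0\subseteq D$, then $\Chop_{e,m}(C_0,D_0)$ holds.

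We prove this by induction on $e$, and expect it to be the main (if mild) obstacle.

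First, terminality is inherited by subsets. A literal or constant that is one on $C$ and zero on $D$ is also one on $C_0$ and zero on $D_0$. In particular, this covers the case where a side becomes empty.

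For the inductive step, $(C,D)$ is nonterminal, while $(C_0,D_0)$ may or may not be terminal; if it is terminal we are done. Otherwise, Spoiler's chop $C'\subseteq C$ with $|C'|\ge|C|/m$ does not directly yield a chop of $C_0$ of the required density, since $C'\cap C_0$ may be small. So the lemma as stated is problematic on the chopping side.

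We therefore do not need monotonicity in the first coordinate. We only need monotonicity in the \emph{non-chopped} side together with the induced-partition trick. The cleaner route is to prove by induction the stronger statement
\[
\Part_{d,m}(A,B)\ \Longrightarrow\ \forall A_0\subseteq A,\ B_0\subseteq B:\ \Part_{d,m}(A_0,B_0),
\]
which holds because a partition of $A$ restricts to a partition of $A_0$ into at most $m$ nonempty parts after discarding empty pieces, and terminality is inherited. Then in the main induction we use $\Part_{d-1,m}(B',A_j)$, which follows from $\Part_{d-1,m}(B,A_j)$ and $B'\subseteq B$. Induction on the statement ``$\Part\Rightarrow\Chop$'' then gives $\Chop_{d-1,m}(B',A_j)$. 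This completes the first claim.

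The consequences follow directly. If a $\Sigma_d$ circuit of fan-in $m$ separated $A$ from $B$, unfold it into a formula with the same depth and fan-in. Proposition~\ref{prop:partition-formula} then gives $\Part_{d,m}(A,B)$, hence $\Chop_{d,m}(A,B)$, contradicting Duplicator's survival.

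For the last sentence, apply the same argument to $(B,A)$. A $\Pi_d$ circuit separating $A$ from $B$ is, after negation via De Morgan, a $\Sigma_d$ circuit separating $B$ from $A$, with the same depth and fan-in. Every alternating depth-$d$ circuit is either $\Sigma_d$ or $\Pi_d$, so both orientations are excluded.
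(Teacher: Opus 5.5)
Your proof is correct and is essentially the paper's argument: chop with a largest part $A_j$ of the winning partition, observe that $\Part_{d-1,m}(B,A_j)$ persists when $B$ is shrunk to Duplicator's $B'$, and apply induction. The only difference is that the paper obtains this monotonicity of $\Part$ from the formula characterization (a separating formula still separates after either side is restricted), while you prove it directly by restricting the partition, which is equally valid; you were also right to abandon monotonicity of $\Chop$, since $B'$ lands on the side that is chopped next.
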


\begin{proof}
We argue by induction on $d$.  Suppose that $(A,B)$ is nonterminal and
that Spoiler has a winning Partition Game strategy beginning with
$A=A_1\sqcup\cdots\sqcup A_t$, where $t\le m$.  Some part $A_j$ has size
at least $\frac{|A|}{m}$; Spoiler uses this part as the chop.  The Partition Game
strategy wins from $(B,A_j)$.  It therefore also wins from $(B',A_j)$ for
every $B'\subseteq B$, since a separating formula remains a separator after
restricting either side.  By induction, Spoiler wins the Chopping Game from
every such $(B',A_j)$.

The circuit consequences follow from
Proposition~\ref{prop:partition-formula}.  A $\Pi_d$ separator of $(A,B)$
is the complement of a $\Sigma_d$ separator of $(B,A)$.
\end{proof}

Unlike the Partition Game, the Chopping Game does not characterize circuits:
the implication in Proposition~\ref{prop:partition-chopping} cannot be
reversed.  For instance, let $B=\{0^n\}$ and
$A=\{0,1\}^n\setminus\{0^n\}$, where $n\ge3$.  Spoiler wins the one-round
$2$-Chopping Game by choosing $A'=\{x:x_1=1\}$, which has relative density
greater than one half.  For every $B'\subseteq B$, the reversed position
$(B',A')$ is terminal, separated by $\neg x_1$ when $B'$ is nonempty.  On
the other hand, a one-round two-part Partition strategy would require two
coordinates which meet the support of every nonzero string.  The unit vectors
show that no two coordinates suffice.  Thus Chopping strategies can be
strictly more general than circuit descents.

\section{Top-Down Approximation Theorems}
\label{sec:top-down-approximation}

For nonempty $S\subseteq\{0,1\}^n$, let $\Pr_S$ denote probability under the
uniform distribution on $S$.

\begin{df}[Proper polynomial]
A polynomial $p\in\F_3[x_1,\ldots,x_n]$ is \emph{proper} if
$p(x)\in\{0,1\}$ for every $x\in\{0,1\}^n$, where the Boolean values are
identified with their images in $\F_3$.
\end{df}

In this section we show that a winning Spoiler strategy on $(A,B)$ yields
two weak separators of $A$ from $B$: a proper polynomial and a shallow
formula. The first proof applies the Razborov--Smolensky polynomial
construction at each round of the Chopping Game, giving a top-down 
implementation
of the polynomial method. Replacing the algebraic OR in this argument by
an ordinary Boolean OR gives the formula statement.

\subsection{Polynomial approximation}

We work over $\F_3$ because parity is linear over $\F_2$ but has large
approximation degree over $\F_3$ \cite{Razborov1987,Smolensky1987}.  This
choice also makes the algebraic OR especially simple: for $a\in\F_3$,
$a^2$ is the indicator of $a\ne0$.

\begin{theorem}[Top-down polynomial approximation]
\label{thm:top-down-polynomial}
There is an absolute constant $C$ with the following property.  Let
$d\ge1$, $m\ge2$, and let $A,B\subseteq\{0,1\}^n$ be nonempty.  If
$\Chop_{d,m}(A,B)$, then there is a proper polynomial $p$ over $\F_3$ of
degree at most $\bigl(C\log m\bigr)^{d-1}$ such that
\[
 \Pr_A[p=1]\ge\frac{1}{2m}
 \qquad\text{and}\qquad
 \Pr_B[p=1]\le m^{-10}.
\]
\end{theorem}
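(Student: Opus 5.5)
We argue by induction on $d$. The idea is to peel off the $B$-side greedily with the inductive polynomials and then combine them using the Razborov--Smolensky approximate OR over $\F_3$. Throughout, the chop $A'\subseteq A$ is fixed once. The universal quantifier over Duplicator's responses $B'\subseteq B$ is what lets us apply the induction hypothesis to many different subsets of $B$.

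\emph{Base case and terminal positions.} If $(A,B)$ is globally terminal, some literal or constant $\ell$ is $1$ on $A$ and $0$ on $B$. Then $p=\ell$ (degree $\le 1$) is proper with $\Pr_A[p=1]=1$ and $\Pr_B[p=1]=0$. Otherwise Spoiler chooses $A'$ with $|A'|\ge |A|/m$, and $\Chop_{d-1,m}(B',A')$ holds for every $B'\subseteq B$. For $d=1$, take $B'=B$. The position $(B,A')$ must then be terminal, so some literal $\ell$ is $1$ on $B$ and $0$ on $A'$. Set $p=1-\ell$: it is proper of degree $1=(C\log m)^0$, with $\Pr_A[p=1]\ge 1/m$ and $\Pr_B[p=1]=0$.

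\emph{Inductive step ($d\ge2$): greedy covering of $B$.} Put $B_1=B$. While $B_i\neq\emptyset$ and $i\le t:=\lceil 30\,m\ln m\rceil$, apply the induction hypothesis to $(B_i,A')$; this is legitimate since $B_i\subseteq B$. We obtain a proper $q_i$ of degree $\le (C\log m)^{d-2}$ with
\[
 \Pr_{B_i}[q_i=1]\ge \tfrac1{2m}
 \qquad\text{and}\qquad
 \Pr_{A'}[q_i=1]\le m^{-10}.
\]
Set $B_{i+1}=B_i\setminus\{q_i=1\}$. Then $|B_{t+1}|\le(1-\frac1{2m})^t|B|\le m^{-15}|B|$. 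So the Boolean function $\bigvee_i q_i$ is $1$ on all but an $m^{-15}$-fraction of $B$. By a union bound, it is $1$ on at most a $t\,m^{-10}\le m^{-8}$ fraction of $A'$ once $C$ is large and $m\ge2$.

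\emph{Approximate OR and derandomization.} For $\ell=\lceil c\log m\rceil$ independent uniform vectors $r^{(1)},\ldots,r^{(\ell)}\in\F_3^t$, define
\[
 Q \;=\; 1-\prod_{j=1}^{\ell}\Bigl(1-\bigl(\textstyle\sum_i r^{(j)}_i q_i\bigr)^2\Bigr).
\]
Each squared factor takes values in $\{0,1\}$ on $\{0,1\}^n$, so $Q$ is proper. Its degree is at most $2\ell(C\log m)^{d-2}\le (C\log m)^{d-1}$ for $C$ large compared with $c$. At each point $x$, $Q(x)=\bigvee_i q_i(x)$ except with probability at most $3^{-\ell}\le m^{-12}$, because a nonzero vector $(q_i(x))_i$ is orthogonal to a random $r$ with probability $1/3$. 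Averaging over $x$ uniform in $A'$ and in $B$, the expected error masses are at most $m^{-12}$ on each side. Markov's inequality applied to their normalized sum yields a fixed choice of the $r^{(j)}$ with
\[
 \Pr_B[Q=0]\le m^{-15}+2m^{-12}
 \qquad\text{and}\qquad
 \Pr_{A'}[Q=1]\le m^{-8}+2m^{-12}.
\]
Finally set $p=1-Q$, which is proper of the same degree. Then
\[
 \Pr_A[p=1]\ge \tfrac1m\Pr_{A'}[Q=0]\ge\tfrac1{2m}
 \qquad\text{and}\qquad
 \Pr_B[p=1]\le m^{-10}.
\]

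\emph{Main obstacle.} The delicate point is the bookkeeping of error parameters. The number $t\approx m\log m$ of inductive polynomials must be large enough to cover $B$ down to $m^{-15}$. At the same time, the union bound over $t$ polynomials must not destroy the $1/(2m)$ mass on $A$; this works only because the induction supplies the strong $m^{-10}$ bound on the opposite side. The OR approximation must then cost only an $O(\log m)$ factor in degree, independent of $t$. The derandomization must also hold simultaneously for both sides, which is why we apply Markov's inequality to the normalized sum rather than to each side separately.
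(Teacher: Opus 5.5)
Your proof is correct and follows the paper's argument: the same greedy covering of $B$ by inductive polynomials applied to counterfactual responses $B_i\subseteq B$ below one fixed chop $A'$, followed by the same Razborov--Smolensky combination over $\F_3$ with $O(\log m)$ random rows (your $1-Q$ is exactly the paper's $p_M$). The differences are minor. The paper notes that the error is one-sided: on $A'$, $p=0$ forces some $q_i=1$ for every choice of coefficients, so the paper averages only over $B$. Also, your intermediate claim $t\,m^{-10}\le m^{-8}$ is false for small $m$ and does not depend on $C$; however, you only need $t\,m^{-10}+2m^{-12}\le\frac12$, which holds for all $m\ge2$.
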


The polynomial in Theorem~\ref{thm:top-down-polynomial} is a one-sided weak
separator: on an arbitrary pair $(A,B)$ it has a small acceptance probability
on $B$ and a positive, but possibly small, acceptance probability on $A$.
For parity, translation symmetry amplifies this gap to constant advantage.

\begin{proof}
Put $\rho=\frac{1}{2m}$ and $\beta=m^{-10}$, and let
$r=\lceil2m\ln(\frac{2}{\beta})\rceil$ and
$s=\lceil\log_3(\frac{2}{\beta})\rceil$.  Thus
$(1-\rho)^r\le\frac{\beta}{2}$, $r\beta\le\frac{1}{2}$, and
$3^{-s}\le\frac{\beta}{2}$.  We prove the theorem by induction on $d$, with
the more precise degree bound $(2s)^{d-1}$.

If $(A,B)$ is terminal, its separating literal is an exact proper
polynomial.  Suppose next that $d=1$ and the position is nonterminal.  Let
$A'$ be Spoiler's first chop.  Taking $B$ itself as Duplicator's response
leaves the terminal position $(B,A')$.  The complement of its separating
literal is one on $A'$ and zero on $B$, and
$|A'|\ge\frac{|A|}{m}$.

Now assume $d\ge2$, and again let $A'$ be Spoiler's first chop.  Set
$R_0=B$.  For $j=1,\ldots,r$, stopping if the residual set is empty, use
$R_{j-1}$ as Duplicator's response to this same chop.  The strategy wins
from $(R_{j-1},A')$ in at most $d-1$ further rounds.  Induction therefore
gives a proper polynomial $g_j$ of degree at most $(2s)^{d-2}$ such that
$\Pr_{R_{j-1}}[g_j=1]\ge\rho$ and
$\Pr_{A'}[g_j=1]\le\beta$.  Define
$R_j=R_{j-1}\cap\{g_j=0\}$.  Let $t\le r$ be the number of polynomials
constructed.  Either the residual is empty or $t=r$ and
$\frac{|R_t|}{|B|}\le(1-\rho)^r\le\frac{\beta}{2}$.

Here the responses $R_0,R_1,\ldots$ describe counterfactual branches below
the same first chop $A'$.  They are not Duplicator's successive moves in a
single play.

For $M=(c_{uj})\in\F_3^{s\times t}$, set
\[
 p_M=\prod_{u=1}^s
 \left[1-\left(\sum_{j=1}^t c_{uj}g_j\right)^2\right].
\]
On the Boolean cube this polynomial is Boolean-valued.  It is one whenever
all the $g_j$ vanish.  If $(g_1(x),\ldots,g_t(x))\ne0$ and $M$ is uniform,
then $p_M(x)=1$ with probability $3^{-s}$.  It follows that
$\E_M\Pr_B[p_M=1]\le\beta$, so fix a matrix $M$ attaining this
bound.  On the other hand, the union bound gives
$\Pr_{A'}[p_M=0]\le t\beta\le\frac{1}{2}$, independently of $M$.  Hence
$\Pr_A[p_M=1]\ge\frac{|A'|}{2|A|}\ge\frac{1}{2m}$.  Finally,
$\deg p_M\le2s(2s)^{d-2}=(2s)^{d-1}$.  Since
$s=O(\log m)$, the theorem follows.
\end{proof}

Let $O_n$ and $E_n$ denote the odd- and even-weight strings in $\{0,1\}^n$.

\begin{corollary}[Parity]
\label{cor:top-down-parity}
For every fixed $d\ge2$, if $\Chop_{d,m}(O_n,E_n)$, then
\[
 m=2^{\Omega(n^{1/(2(d-1))})}.
\]
The same lower bound therefore holds for the fan-in of depth-$d$
$\ACzero$ circuits computing parity.
\end{corollary}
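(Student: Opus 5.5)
Apply Theorem~\ref{thm:top-down-polynomial} to the position $(O_n,E_n)$ to get a proper polynomial $p$ over $\F_3$ of degree $D\le (C\log m)^{d-1}$ with $\Pr_{O_n}[p=1]\ge\frac{1}{2m}$ and $\Pr_{E_n}[p=1]\le m^{-10}$. This weak, one-sided separator must then be amplified to one with constant advantage, using the translation symmetry of parity. Once that is done, Smolensky's bound \cite{Smolensky1987} finishes the argument. It says that a polynomial over $\F_3$ agreeing with parity (or its complement) on a $\frac12+\epsilon$ fraction of $\{0,1\}^n$, for constant $\epsilon>0$, needs degree $\Omega_\epsilon(\sqrt n)$.

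\textbf{Amplification.} For $z\in\{0,1\}^n$, the shifted polynomial $p_z(x)=p(x\oplus z)$ is still proper. Its degree is unchanged, because each $x_i\oplus z_i$ is either $x_i$ or $1-x_i$. If $z$ is uniform over $E_n$, then $x\oplus z$ is uniform over the parity class of $x$. Hence for each fixed $x\in O_n$ we have $\Pr_z[p_z(x)=1]\ge\frac1{2m}$, and for each $x\in E_n$ we have $\Pr_z[p_z(x)=1]\le m^{-10}$.

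Take $t=\lceil 4m\rceil$ independent shifts $z_1,\ldots,z_t\in E_n$ and form the algebraic OR
\[
q=1-\prod_{i=1}^t (1-p_{z_i}).
\]
This has degree $tD$, which is too large since $t\approx m$. So the OR should instead be implemented by the Razborov--Smolensky trick already used in the proof of Theorem~\ref{thm:top-down-polynomial}: take $s'=O(1)$ random $\F_3$-combinations, squared, giving degree $O(D)$ with constant error. The resulting $q$ has the following properties.
\begin{itemize}
\item For $x\in O_n$: some $p_{z_i}(x)=1$ with probability at least $1-(1-\frac1{2m})^{4m}\ge 1-e^{-2}$, and when this happens the random combination detects it except with probability $3^{-s'}$.
\item For $x\in E_n$: all $p_{z_i}(x)=0$ except with probability at most $t\,m^{-10}$, by a union bound.
\end{itemize}
Therefore $\E\Pr_{x}[q(x)=\PARITY(x)]\ge\frac12+\Omega(1)$ over uniform $x$, and we can fix the random choices to attain at least this expectation.

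\textbf{Degree comparison.} Smolensky's bound now gives $2s'D=\Omega(\sqrt n)$, that is, $(C\log m)^{d-1}=\Omega(\sqrt n)$. Hence $\log m=\Omega(n^{1/(2(d-1))})$, which is the claimed bound on $m$.

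\textbf{Circuit consequence.} A depth-$d$ fan-in-$m$ circuit computing parity is, by Proposition~\ref{prop:partition-chopping}, either $\Sigma_d$ (giving $\Chop_{d,m}(O_n,E_n)$) or $\Pi_d$ (giving $\Chop_{d,m}(E_n,O_n)$). The second case is symmetric. Apply the theorem to $(E_n,O_n)$, and use shifts by $z\in E_n$ again, since these preserve parity classes. The amplified polynomial then approximates $1-\PARITY$, and Smolensky's bound applies to it equally.

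\textbf{Main obstacle.} The delicate step is keeping the degree at $O(D)$ rather than $O(mD)$ while amplifying a $\frac1{2m}$ acceptance rate to a constant. The approach handles this by using the probabilistic $\F_3$ OR over the $t$ shifts, whose degree cost is independent of $t$. The only thing to check is that the error terms ($3^{-s'}$ and $t\,m^{-10}$) stay bounded by a small constant, which holds for $m\ge2$ with a suitable constant $s'$.
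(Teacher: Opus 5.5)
Your proposal is correct and follows essentially the same route as the paper's proof. You shift $p$ by $4m$ independent uniform even-weight vectors, combine the shifts with a constant number of random squared $\F_3$-linear combinations (the paper fixes two coefficient rows, giving degree $4D$ and agreement above $3/4$), fix the randomness by averaging, apply Smolensky's bound, and handle the $\Pi_d$ case by swapping $O_n$ and $E_n$ and complementing.
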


\begin{proof}
Let $D=(C\log m)^{d-1}$, and take the polynomial $p$ supplied by
Theorem~\ref{thm:top-down-polynomial}.  For an even-weight string $z$, the
translate $p_z(x)=p(x\oplus z)$ has the same degree and is proper.  For
fixed $x\in O_n$ (respectively, $x\in E_n$), a uniform $z\in E_n$ makes
$x\oplus z$ uniform in $O_n$ (respectively, in $E_n$).  Choose independent
uniform $z_1,\ldots,z_{4m}\in E_n$ and independent uniform coefficients
$c_{uj}\in\F_3$, where $u\in\{1,2\}$ and $j\in[4m]$, and form the
algebraic OR
\[
 H=1-\prod_{u=1}^2
 \left[1-\left(\sum_{j=1}^{4m}c_{uj}p_{z_j}\right)^2\right].
\]
For every odd-weight input the false-negative probability is at most
$e^{-2}+\frac{1}{9}<\frac{1}{4}$, while for every even-weight input the
false-positive probability is at most
$4m\cdot m^{-10}<\frac{1}{4}$.  By averaging, one may fix the translates
and the two coefficient rows so that the resulting proper polynomial has
degree at most $4D$ and agrees with parity on more than three quarters of
the cube.  The standard
$\Omega(\sqrt n)$ lower bound for an $\F_3$ polynomial agreeing with parity
on at least three quarters of the cube
\cite{Smolensky1987} gives the result.  The same argument applies with
$O_n$ and $E_n$ reversed, producing a polynomial for the indicator of
$E_n$; its complement approximates parity with the same degree.  This gives
the circuit consequence via
Proposition~\ref{prop:partition-chopping}.
\end{proof}

\subsection{Formula approximation}

\begin{theorem}[Top-down formula approximation]
\label{thm:top-down-formula}
There is an absolute constant $C$ with the following property.  Let
$d\ge1$, $m\ge2$, and suppose $\Chop_{d,m}(A,B)$ for nonempty
$A,B\subseteq\{0,1\}^n$.  Then there is a $\Pi_{d-1}$ $\ACzero$ formula $F$ of
fan-in at most $Cm\log m$ and with at most
$\bigl(Cm\log m\bigr)^{d-1}$ leaves such that
\[
 \Pr_A[F=1]\ge\frac{1}{2m}
 \qquad\text{and}\qquad
 \Pr_B[F=1]\le m^{-10}.
\]
\end{theorem}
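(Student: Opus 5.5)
We follow the proof of Theorem~\ref{thm:top-down-polynomial} closely, replacing the randomized algebraic OR by an ordinary Boolean OR. We put $\rho=\frac{1}{2m}$ and $\beta=m^{-10}$, and take $r=\lceil 2m\ln(2/\beta)\rceil=O(m\log m)$, so that $(1-\rho)^r\le\beta$ and $r\beta\le\frac12$ (enlarging $C$ as needed; $m\ge2$ makes $r\beta$ tiny). The argument is an induction on $d$ with the stronger form: a $\Pi_{d-1}$ formula of fan-in at most $r$ and at most $r^{d-1}$ leaves, accepting at least a $\rho$-fraction of $A$ and at most a $\beta$-fraction of $B$.

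\emph{Base cases.} If $(A,B)$ is terminal, a separating literal (a depth-$0$ formula, belonging to every orientation) works exactly. If $d=1$ and the position is nonterminal, let $A'$ be Spoiler's chop. Taking $B'=B$ makes $(B,A')$ terminal, with a literal $\ell$ that is one on $B$ and zero on $A'$. Then $F=\neg\ell$ is a literal that is one on $A'$ and zero on $B$, so $\Pr_A[F=1]\ge 1/m$.

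\emph{Inductive step, $d\ge2$.} Let $A'$ be the first chop, and set $R_0=B$. For $j=1,\ldots,r$, as long as $R_{j-1}\neq\emptyset$, we use $R_{j-1}$ as a counterfactual Duplicator response. Spoiler wins $\Chop_{d-1,m}(R_{j-1},A')$, so induction gives a $\Pi_{d-2}$ formula $G_j$ with
\[
\Pr_{R_{j-1}}[G_j=1]\ge\rho,\qquad \Pr_{A'}[G_j=1]\le\beta,
\]
fan-in at most $r$, and at most $r^{d-2}$ leaves. We set $R_j=R_{j-1}\cap\{G_j=0\}$; then $|R_t|\le(1-\rho)^r|B|\le\beta|B|$ (or $R_t=\emptyset$). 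Define
\[
F=\bigwedge_{j\le t}\neg G_j .
\]
Each $\neg G_j$ is a $\Sigma_{d-2}$ formula by De Morgan, with the same leaf count. An AND of $\Sigma_{d-2}$ formulas is a $\Pi_{d-1}$ formula; when $d=2$ the $\neg G_j$ are literals and $F$ is a single AND, which is $\Pi_1$. Merging into alternation does not increase fan-in beyond $r$ at the top, and the leaf count is at most $t\cdot r^{d-2}\le r^{d-1}$.

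\emph{Acceptance bounds.} $F=1$ exactly when all $G_j=0$. On $B$ this happens only on $R_t$, so $\Pr_B[F=1]\le\beta$. On $A'$, the union bound gives $\Pr_{A'}[F=0]\le t\beta\le\frac12$, so $\Pr_A[F=1]\ge\frac{|A'|}{2|A|}\ge\frac1{2m}$.

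The main point to check is bookkeeping, not probability. First, we must verify that the orientation alternates correctly: the induction yields $\Pi_{d-2}$ formulas, their negations are $\Sigma_{d-2}$, and their conjunction is $\Pi_{d-1}$. Second, the fan-in bound $Cm\log m$ must absorb $r$ at every level, which holds with $C$ chosen so that $r\le Cm\log m$ for all $m\ge2$. No derandomization step is needed, unlike the $\F_3$ proof, since the Boolean AND is exact.
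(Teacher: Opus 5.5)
Your proposal is correct and follows the paper's proof essentially step for step. It uses the same parameters $\rho$, $\beta$, $r$ and the same literal base cases. It also uses the same counterfactual covering $R_0=B\supseteq R_1\supseteq\cdots$ below a fixed chop $A'$, the same formula $F=\bigwedge_j \neg G_j$, and the same union bound on $A'$. The only difference is cosmetic: you record $(1-\rho)^r\le\beta$ where the paper notes the slightly stronger $\beta/2$, which does not matter.
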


\begin{proof}
Put $\rho=\frac{1}{2m}$, $\beta=m^{-10}$, and
$r=\lceil2m\ln(\frac{2}{\beta})\rceil$.  As above,
$(1-\rho)^r\le\frac{\beta}{2}$ and $r\beta\le\frac{1}{2}$.
We prove by induction on $d$, directly from the Chopping Game, the stronger
bounds $r$ on fan-in and $r^{d-1}$ on the number of leaves.

A terminal position has an exact separating literal.  For a nonterminal
one-round position, Spoiler's first chop $A'$ makes $(B,A')$ terminal; the
complement of its separating literal is one on $A'$ and zero on $B$.

Let $d\ge2$, and fix Spoiler's first chop $A'$.  Starting from $R_0=B$,
for $j=1,\ldots,r$, stopping if the residual is empty, treat $R_{j-1}$ as a
separate counterfactual response to this fixed chop.
By induction there is a $\Pi_{d-2}$ formula $G_j$ such that
$\Pr_{R_{j-1}}[G_j=1]\ge\rho$ and
$\Pr_{A'}[G_j=1]\le\beta$.  Put
$R_j=R_{j-1}\cap\{G_j=0\}$.  Let $t\le r$ be the number of formulas
constructed.  The final residual $R_t$ has size at most
$\frac{\beta|B|}{2}$.

Let $\overline{G_j}$ denote the formula obtained by interchanging AND and
OR gates and complementing its literals, and set
$F=\bigwedge_{j=1}^t\overline{G_j}$.  Thus $F$ is a $\Pi_{d-1}$ $\ACzero$
formula.
The inputs in $B$ accepted by $F$ are exactly $R_t$.  The union bound gives
$\Pr_{A'}[F=0]\le t\beta\le\frac{1}{2}$.  Therefore
$\Pr_B[F=1]\le\beta$ and
$\Pr_A[F=1]\ge\frac{|A'|}{2|A|}\ge\frac{1}{2m}$.  The top AND has at most $r$
inputs.  Induction gives fan-in at most $r$ throughout and at most
$r^{d-1}$ leaves.  Since $r=O(m\log m)$, this proves the theorem.
\end{proof}

We remark that the covering step is a truncated form of the set-cover
argument underlying Hirahara's duality between exact
$\OR\circ\mathcal C$ formulas and one-sided approximation by $\mathcal C$
\cite{Hirahara2017}.  In the exact-cover setting the usual integrality loss
is logarithmic in the size of the underlying set.  Here we stop after
$O(m\log m)$ stages and allow an $m^{-10}$ residual; this is what makes the
bound independent of $n$.  A related LP-duality observation appears in
\cite[Remark~1]{GoosRiazanovSofronovaSokolov2023}.

Consequently, any lower bound against the weak bounded-depth formula
separators in Theorem~\ref{thm:top-down-formula} gives a Chopping Game lower
bound.  For example, the
formula $F$ in the theorem has average sensitivity
$O((\log m)^d)$ by standard switching-lemma bounds
\cite{Hastad1986}. 
For parity this already gives a
lower bound through the correlation of $F$ with the top Fourier character.
Indeed, with the usual Walsh--Fourier normalization,
$|\widehat F([n])|=\frac12(\Pr_{O_n}[F=1]-\Pr_{E_n}[F=1])\ge\frac1{8m}$.
Since
$\operatorname{as}(F)=4\sum_S|S|\widehat F(S)^2$, it follows that
$\operatorname{as}(F)\ge n/(16m^2)$.  Comparison with the upper bound gives
$m^2(\log m)^d=\Omega(n)$, and hence
$m=\Omega(\sqrt n/(\log n)^{d/2})$ (the conclusion is immediate if $m>n$,
and otherwise $\log m\le\log n$).
This is a bound on the separator $F$, not on the average sensitivity of an
arbitrary function whose input sets admit a Chopping strategy.  By contrast,
Theorem~\ref{thm:top-down-polynomial} applies algebraic approximation one
round at a time in the Chopping Game.  It remains open to find a comparably
direct top-down form of the switching lemma.

Our formulation of the Chopping Game provides a common language for several earlier top-down arguments. In this framework, their different proof constructions correspond to an \textit{explicit} Duplicator strategy built from local limits at successive stages \cite{HastadJuknaPudlak1995,MW19,GoosRiazanovSofronovaSokolov2023}. A principal motivation for this line is
to develop methods not inherited from switching lemmas or polynomial
approximation. The recent depth-four work points to
$\textsc{Inner Product}$ against $\ACzero\circ\XOR$ as one possible target
\cite{GoosRiazanovSofronovaSokolov2023}.
We now consider local KW and Chopping Games, where Spoiler may win without
finding a single global coordinate which separates the two surviving sets.

\section{Local Karchmer--Wigderson and Chopping Games}
\label{sec:distance-k}

Throughout this section, $k$ is fixed while $n$ grows.  We do not assume
that $k$ is odd.

\begin{definition}[$k$-Local and Distance-$k$ KW and Chopping Games]
\label{def:k-local-chopping}\ 
\begin{itemize}
\item
The \emph{$k$-Local KW Game} is the promise version of usual KW Game where Alice receives inputs $x$ and Bob receives input $y$ with the promise that $x,y$ have Hamming distance $\le k$.
\item
The \emph{$k$-Local Chopping Game} has the same moves as the usual Chopping Game, but Spoiler wins upon reaching a position $(A',B')$ where, for every $x\in A'$, there is a coordinate $i=i(x)$
such that
\[
 x_i\ne y_i
 \quad\text{for every $y\in B'$ with $1\le\dist(x,y)\le k$},
\]
and the symmetric condition holds for every $y\in B'$.
\item
We also consider exact distance-$k$ versions of these games, called the {\em Distance-$k$ KW Game} and {\em Distance-$k$ Chopping Game}.
Here the only relevant input pairs $(x,y)$ are those of Hamming distance exactly $k$.
\end{itemize}
\end{definition}

A single coordinate which separates $A$ from $B$ satisfies the $k$-local
condition, and the $k$-local condition implies the distance-$k$ condition.
Thus
\[
 \Chop_{d,m}(A,B)
 \ \Longrightarrow\
 \Chop^{\le k}_{d,m}(A,B)
 \ \Longrightarrow\
 \Chop^{=k}_{d,m}(A,B).
\]
Each implication makes Spoiler's task easier, and consequently makes a lower
bound against the resulting game stronger.

The rest of this section focuses on the exact distance-$k$ problems, rather than the $k$-local problems for $\mr{PARITY}$.
There are two reasons to single out distance exactly $k$.  First, for fixed
$k$ the sphere of radius $k$ contains a $1-O_k(1/n)$ fraction of the ball of
radius $k$.  Second, exact distance retains a key distinction between even and
odd $k$.  Odd-distance pairs have opposite parity and admit the
$2\lceil\log n\rceil$-bit protocol of
Proposition~\ref{prop:odd-distance-upper}; for even $k$, the $\frac{k}{2}\log n$ lower bound in
Theorem~\ref{thm:tarui-even-k} is independent of the number of rounds.

\subsection{Tight bounds for two-round protocols}

A binary linear error-correcting code gives the following two-round upper
bound.

\begin{theorem}[Two-round upper bound]
\label{thm:distance-two-round-upper}
For every fixed $k$ and all sufficiently large $n$, the Distance-$k$ KW
Game has a two-round deterministic protocol of total cost at most
$(k+1)\lceil\log n\rceil$.
\end{theorem}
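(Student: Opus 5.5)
The plan is to use a BCH code to let Bob recover the difference $x\oplus y$ exactly, after which he can name a differing coordinate; the whole protocol then consists of Alice sending a syndrome and Bob replying with an index. Set $\ell=\lceil\log(n+1)\rceil$ and work over $\F_{2^\ell}$, choosing distinct nonzero field elements $\alpha_1,\ldots,\alpha_n$ (possible since $n\le 2^\ell-1$). For $z\in\F_2^n$ define the syndrome
\[
 \sigma(z)=\Bigl(\sum_{i}z_i\alpha_i,\ \sum_i z_i\alpha_i^3,\ \ldots,\ \sum_i z_i\alpha_i^{2k-1}\Bigr)\in\F_{2^\ell}^{k},
\]
which is $\F_2$-linear and takes $k\ell$ bits. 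The odd powers suffice: over characteristic two, $\sum_i z_i\alpha_i^{2j}=(\sum_i z_i\alpha_i^{j})^2$ for $z_i\in\{0,1\}$, so the map $\sigma$ determines all power sums $\sum_i z_i\alpha_i^{j}$ for $1\le j\le 2k$. This is the standard binary BCH parity-check matrix, whose code has designed distance at least $2k+1$ by the Vandermonde argument \cite{BoseRayChaudhuri60}.

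\textbf{Injectivity on small weights.} The key fact is that $\sigma$ is injective on vectors of weight at most $k$. If $\sigma(z)=\sigma(z')$ with $|z|,|z'|\le k$, then $w=z\oplus z'$ has weight at most $2k$ and all power sums $\sum_i w_i\alpha_i^j=0$ for $j=1,\ldots,2k$. Restricting to $\supp(w)$, we obtain a square Vandermonde-type system $\bigl(\alpha_i^j\bigr)_{j\le |w|,\,i\in\supp(w)}$. Its determinant is $\prod_{i}\alpha_i$ times a Vandermonde determinant, which is nonzero because the $\alpha_i$ are distinct and nonzero. Hence $w=0$.

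\textbf{The protocol.} Alice sends $\sigma(x)$, costing $k\ell$ bits. Bob computes $\sigma(x)\oplus\sigma(y)=\sigma(x\oplus y)$ by linearity. Since $|x\oplus y|=k$, the weight-$k$ vector with this syndrome is unique, and Bob can find it (say by brute force; the protocol is only required to be deterministic, not efficient). He then sends the least element of $\supp(x\oplus y)$ using $\lceil\log n\rceil$ bits, and both parties output it. This coordinate is nonempty because $k\ge1$, and $x_i\ne y_i$ holds there. The total cost is $k\lceil\log(n+1)\rceil+\lceil\log n\rceil$.

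The only remaining point is the rounding mismatch between $\lceil\log(n+1)\rceil$ and $\lceil\log n\rceil$, which differ exactly when $n$ is a power of two. I would handle this by using $\ell=\lceil\log n\rceil$ and assigning the $\alpha_i$ to be distinct elements of $\F_{2^\ell}$ with one allowed to be $0$. With $\alpha_{i_0}=0$ the Vandermonde argument still works: the column for index $i_0$ is identically zero in the powers $j\ge1$, so we must instead treat it separately. To do this, I would add a single parity bit $\sum_i z_i$ (the $j=0$ power sum), which is legitimate since for fixed distance $k$ the parity of $x\oplus y$ is known. This parity does not even need to be sent, because $\sum_i(x\oplus y)_i=k \bmod 2$. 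Including the $j=0$ row then restores a full Vandermonde matrix over distinct elements, with all rows $j=0,\ldots,2k-1$ determined by the odd power sums (the $j=0$ row being known) together with squaring. This makes the syndrome injective on the coset of weight-$k$ vectors using $k\lceil\log n\rceil$ bits. The total cost is then $(k+1)\lceil\log n\rceil$ for large $n$, as claimed. Checking that the $j=0$ row together with $j\le 2k-1$ gives enough equations for weight up to $2k$ differences is the step I expect to require care; if it falls short by one equation, I would use the fact that both vectors have weight exactly $k$.
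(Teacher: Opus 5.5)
Your proof is correct and follows the paper's: the protocol is the same (Alice sends the $k$ odd BCH power sums over $\F_{2^{\lceil\log n\rceil}}$ with one coordinate labeled $0$, Bob decodes $x\oplus y$ and names a support coordinate). The only difference is the zero label: you append the known parity row $j=0$, whereas the paper deletes the zero-labeled coordinate and invokes the BCH designed distance; both rest on the difference of two weight-$k$ vectors having even weight. The step you flagged goes through as stated: rows $j=0,\dots,2k-1$ are all determined (odd $j$ sent, even $j\ge2$ by squaring, $j=0$ by evenness), and a difference with support size $s\le 2k$ is annihilated by the nonsingular $s\times s$ Vandermonde block on rows $0,\dots,s-1$, so no fallback is needed.
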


\begin{proof}
Put $\ell=\lceil\log n\rceil$ and label the coordinates by distinct elements
of $\F_{2^\ell}$, including zero.  The $k$ checks
$\sum_j x_j a_j^r$, for $r=1,3,\ldots,2k-1$, give a $k\ell$-bit syndrome.
If two weight-$k$ vectors had the same syndrome, every odd power sum of their
difference would vanish; the even power sums vanish as Frobenius powers.
Restricted to the nonzero labels, the difference would therefore be a
nonzero word of weight at most $2k$ in the primitive BCH code of designed
distance $2k+1$
\cite{BoseRayChaudhuri60}.  The restriction is nonzero because the difference
of two distinct weight-$k$ vectors has positive even weight.  Thus the
syndrome is injective on vectors of weight exactly $k$.

Alice sends the syndrome of $x$.  From it and $y$, Bob obtains the syndrome
of $x+y$, recovers this weight-$k$ vector, and sends the name of any
coordinate in its support.  Both parties output that coordinate.  The two
messages have lengths $k\ell$ and $\ell$.  The same protocol works with the
roles reversed.
\end{proof}

We first state the set-system consequence used below.  Fix
$r,p\ge2$ and $1\le t<r$.  If
$\mathcal F\subseteq\binom{[n]}r$ contains no $p$-petal sunflower whose
core has size $t$, then
\[
 |\mathcal F|=O_{r,t,p}\bigl(n^{\max\{t,r-t-1\}}\bigr).
\]
F\"uredi's kernel theorem~\cite{Furedi1983} gives a subfamily
$\mathcal F^*\subseteq\mathcal F$, of size at least
$c_{r,p}|\mathcal F|$, in which every intersection of two distinct members
is the core of a $p$-petal sunflower in $\mathcal F$.  Therefore
$\mathcal F^*$ has no two members whose intersection has size $t$.  The
forbidden-intersection theorem of Frankl and F\"uredi
\cite[Theorem~2.1]{FranklFuredi1985} gives
$|\mathcal F^*|=O_r(n^{\max\{t,r-t-1\}})$, proving the claim.

\pagebreak[3]
\begin{lemma}[Fixed-distance estimates]
\label{lem:fixed-distance-estimates}
For every fixed $k\ge1$ and $S\subseteq\{0,1\}^n$:
\begin{enumerate}[(i)]
\item if $S$ contains no pair at distance $2k$, then
$|S|=O_k(\frac{2^n}{n^k})$;
\item if $k$ is even and $S$ contains no three points at pairwise distance
$k$, then $|S|=O_k(\frac{2^n}{n^{k/2}})$.
\end{enumerate}
\end{lemma}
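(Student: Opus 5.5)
Both parts will be proved by a translation (averaging) argument that transfers the problem from $S\subseteq\{0,1\}^n$ to a set system of $r$-subsets of $[n]$, where the set-system consequence stated just above applies.

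\textbf{Averaging step.} Fix $r$ and consider all pairs $(z,T)$ with $z\in\{0,1\}^n$ and $T\in\binom{[n]}{r}$. For $T$, let $e_T$ be its indicator vector. For each $z$, put
\[
 \mathcal F_z=\{T\in\tbinom{[n]}{r}: z\oplus e_T\in S\}.
\]
Each $x\in S$ lies in $\binom nr$ such translates $z=x\oplus e_T$. Hence $\sum_z|\mathcal F_z|=|S|\binom nr$, and so some $z$ has $|\mathcal F_z|\ge |S|\binom nr/2^n$. It therefore suffices to show $|\mathcal F_z|=O_k(n^{r-k})$ in case (i) and $O_k(n^{r-k/2})$ in case (ii). Since $\dist(z\oplus e_T,z\oplus e_{T'})=|T\triangle T'|=2(r-|T\cap T'|)$, forbidden Hamming distances in $S$ become forbidden intersection sizes in $\mathcal F_z$.

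\textbf{Part (i).} Take $r=2k-1$ and $t=k-1$. Then $|T\cap T'|=t$ corresponds to distance $2(r-t)=2k$, so $\mathcal F_z$ has no two members meeting in exactly $k-1$ elements. In particular it contains no $2$-petal sunflower with core of size $k-1$. The set-system bound gives
\[
 |\mathcal F_z|=O(n^{\max\{k-1,\,k-1\}})=O(n^{k-1})=O(n^{r-k}),
\]
which yields $|S|=O_k(2^n/n^k)$. (Frankl--Füredi applied directly also suffices here.)

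\textbf{Part (ii).} Write $k=2j$. Three points at pairwise distance $k$ correspond to three sets pairwise intersecting in $r-j$ elements. This is weaker than a sunflower condition, but a $3$-petal sunflower with core of size $t=r-j$ is such a triple. Hence $\mathcal F_z$ has no $3$-petal sunflower with core size $r-j$, and the set-system bound gives $|\mathcal F_z|=O(n^{\max\{r-j,\,j-1\}})$. Choosing $r\ge 2j-1$, say $r=2j$ with $t=j$, gives exponent $j=r-j=r-k/2$, and so $|S|=O_k(2^n/n^{k/2})$. The constraint $1\le t<r$ holds since $j\ge1$.

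\textbf{Main obstacle.} The only delicate points are checking that the chosen $(r,t)$ makes $\max\{t,r-t-1\}$ equal the target exponent $r-(\text{target})$, and checking that the forbidden configuration in $S$ translates into the forbidden sunflower in $\mathcal F_z$. In (ii) a sunflower is a special case of a pairwise-equal-intersection triple, so the implication goes in the right direction. The averaging constant $\binom nr\approx n^r/r!$ is absorbed into $O_k$.
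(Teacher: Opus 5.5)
Your proof is correct and follows the paper's argument: the same translation-averaging identity reduces the problem to the families $\mathcal F_r(z)$, and your part (ii) matches the paper exactly ($r=2j$, a three-petal sunflower with core of size $j$). In part (i) the paper takes $r=2k$ with core size $k$ instead of your $r=2k-1$, $t=k-1$; both give exponent $r-k$, but your choice breaks the hypotheses $r\ge 2$, $t\ge 1$ of the set-system bound when $k=1$. That case is trivial, since distinct singletons are disjoint and so $|\mathcal F_z|\le 1$; alternatively, simply use $r=2k$.
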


\begin{proof}
For $z\in\F_2^n$ and an integer $r$, let
\[
 \mathcal F_r(z)=
 \left\{I\in\binom{[n]}r:z+\1_I\in S\right\}.
\]
Averaging over $z$, or simply double-counting the pairs $(z,I)$, gives
\[
 \E_z|\mathcal F_r(z)|=
 \frac{|S|}{2^n}\binom nr.
\]

For (i), take $r=2k$.  If
$I,J\in\mathcal F_{2k}(z)$ satisfy $|I\cap J|=k$, then the two points
$z+\1_I$ and $z+\1_J$ are at distance
$|I\mathbin\triangle J|=2k$, contrary to the hypothesis.  Thus
$\mathcal F_{2k}(z)$ contains no two-petal sunflower with a core of size
$k$.  By the preceding bound, it has $O_k(n^k)$ members for every $z$.
The averaging identity, together with
$\binom n{2k}=\Theta_k(n^{2k})$, gives
$|S|=O_k(2^n/n^k)$.

For (ii), write $k=2h$ and take $r=2h$.  A three-petal sunflower
$I_1,I_2,I_3\in\mathcal F_{2h}(z)$ with core of size $h$ would satisfy
$|I_a\mathbin\triangle I_b|=2h=k$ whenever $a\ne b$.  The three points
$z+\1_{I_1},z+\1_{I_2},z+\1_{I_3}$ would therefore be pairwise at distance
$k$, contrary to the hypothesis.  Hence the same bound gives
$|\mathcal F_{2h}(z)|=O_h(n^h)$ for every $z$.  Applying the averaging
identity once more yields
$|S|=O_h(2^n/n^h)=O_k(2^n/n^{k/2})$.
\end{proof}

\begin{theorem}[Two-round lower bound]
\label{thm:distance-two-round-lower}
For every fixed $k$, every two-round deterministic protocol for the
Distance-$k$ KW Game has total cost at least
$(k+1)\log n-O_k(1)$.
\end{theorem}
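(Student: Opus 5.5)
Let $L_1$ and $L_2$ be the lengths of Alice's first message and Bob's reply in a two-round protocol. We will show separately that $L_1\ge k\log n-O_k(1)$ and $L_2\ge\log n-O_k(1)$; the sum gives the theorem. Throughout we use that the first message partitions $\{0,1\}^n$ into at most $2^{L_1}$ classes $S_\mu$, one for each value of the message.

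\textbf{Bound on $L_1$.} Fix a class $S=S_\mu$ and suppose $x,x'\in S$ satisfy $\dist(x,x')=2k$. Pick $y$ at distance exactly $k$ from both, lying halfway between them, so that $x+y$ and $x'+y$ have disjoint supports whose union is $\supp(x+x')$. After message $\mu$, Bob's reply and the final output depend only on $(\mu,y)$. So Bob must output one coordinate $i$ that works for both pairs $(x,y)$ and $(x',y)$. That is impossible, because no coordinate lies in both disjoint supports. Hence no class contains a pair at distance $2k$. By Lemma~\ref{lem:fixed-distance-estimates}(i), every class has size $O_k(2^n/n^k)$, so
\[
2^{L_1}\ge\Omega_k(n^k),\qquad\text{that is,}\qquad L_1\ge k\log n-O_k(1).
\]

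\textbf{Bound on $L_2$.} Take the largest class $S$; it has size at least $2^{n-L_1}$. Bob's reply, with $\mu$ fixed, is a function of $y$, and the output coordinate is a function of $(\mu,\text{reply})$. So on this branch at most $2^{L_2}$ distinct coordinates can ever be output. Call this set $T$. We need to show $|T|\ge n-O_k(1)$, or at least $|T|\ge\Omega_k(n)$. Suppose instead $|T|<n/2$. We would then look for $x\in S$ and $y$ at distance $k$ with $x+y$ supported outside $T$. Since $T$ is small, that pair would defeat the protocol.

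The obstacle is that $S$ might be concentrated so that every distance-$k$ neighbour lies in a direction inside $T$. But $S$ is nonempty, and for any $x\in S$ we can choose $y=x+\1_I$ with $I\subseteq[n]\setminus T$ and $|I|=k$. This is possible as soon as $n-|T|\ge k$. The output must lie in $I$, which is disjoint from $T$, a contradiction. Therefore $|T|>n-k$ and
\[
L_2\ge\log(n-k)=\log n-O_k(1).
\]
This step is routine. The real content of the proof is the halfway construction in the $L_1$ bound combined with Lemma~\ref{lem:fixed-distance-estimates}(i). The only detail to check there is that a halfway point $y$ exists, which holds because $\dist(x,x')=2k$ is even.
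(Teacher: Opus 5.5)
Your bound on Alice's message is the paper's argument: a common midpoint $y$ of two same-class strings at distance $2k$, followed by Lemma~\ref{lem:fixed-distance-estimates}(i). Your bound on Bob's reply spells out what the paper only asserts in one sentence.

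One step in the Bob bound is not justified as written. You claim the output coordinate is a function of $(\mu,\text{reply})$, so that at most $2^{L_2}$ coordinates are output on the branch. The distance-$k$ promise is not a rectangle. At a transcript $\tau=(\mu,\beta)$, the parties' outputs $a(x,\tau)$ and $b(y,\tau)$ need only agree on pairs in $S_\mu\times Y_\beta$ at distance exactly $k$. So the common output is forced to be constant only on connected components of that bipartite graph, not on the whole transcript. The paper itself uses per-party output functions $a(u),b(v)$ in the proof of Theorem~\ref{thm:tarui-even-k}. The repair is immediate: fix $x\in S$ first. Alice's output is a function of $(x,\beta)$, so the set $T_x$ of her possible outputs has size at most $2^{L_2}$. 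Your choice $y=x+\1_I$ with $|I|=k$ and $I\cap T_x=\emptyset$ then gives the same contradiction. Hence $T_x$ meets every $k$-set, and $2^{L_2}\ge n-k+1$.

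A smaller point: if message lengths vary (prefix-free codes), total cost is the maximum over transcripts of $|\mu|+|\beta|$, not $\max|\mu|+\max|\beta|$. Your Bob argument works on every nonempty class. So apply it to a class whose message has length at least $k\log n-O_k(1)$, rather than to the largest class, whose size you never use. Such a class exists by Kraft's inequality, since there are $\Omega_k(n^k)$ nonempty classes. With these adjustments the proof is complete.
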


\begin{proof}
For $n<2k$, the claim is absorbed by the $O_k(1)$ term.  Assume $n\ge2k$.
Let $T$ be the total cost and suppose Alice speaks first.  Her first message
has at most $2^T$ possible values, so some message class
$S\subseteq\{0,1\}^n$ has size at least $2^{n-T}$.  We claim that $S$
contains no two strings $x,x'$ at distance $2k$.  Otherwise choose a common
midpoint $y$, at distance $k$ from each.  Bob has the same input and sees the
same first message on $(x,y)$ and $(x',y)$, so he sends the same reply and
produces the same output $i$ in both cases.  Correctness would require
$y_i$ to differ from both $x_i$ and $x'_i$.  This is impossible, since at
every coordinate $y$ agrees with at least one of $x,x'$.

Lemma~\ref{lem:fixed-distance-estimates}(i) gives
$2^{n-T}=O_k(\frac{2^n}{n^k})$ and hence $k\log n - O_k(1)$ for Alice's communication alone. The addition $\log n$ is required for Bob to communicate a single coordinate where $x_i \neq y_i$.
\end{proof}

\subsection{Odd vs even $k$}

Karchmer's familiar diagonal argument for the Universal Relation has an
exact-distance analogue; see \cite{TardosZwick1997}.  Complete a
protocol for $\UR_n$ off its promise.  If three distinct diagonal inputs
reach the same transcript, the rectangle property puts all six ordered cross
pairs there.  Correctness then forces the six local output values to be one
coordinate which differs on each of the three unordered pairs, an
impossibility.  Thus a diagonal transcript class has size at most two, giving
the round-independent total-cost lower bound $n-1$.  At fixed distance, the
same argument says that a diagonal transcript class must be triangle-free in
the exact-distance graph.

\ifnum\submissionmode=1

\else
The second author learned the diagonal argument in the proof of the following
theorem from Jun Tarui (personal communication).  The sharp error term stated
here uses Lemma~\ref{lem:fixed-distance-estimates}(ii).
\fi

\begin{theorem}
\label{thm:tarui-even-k}
For every fixed even $k\ge2$, every deterministic protocol for the Distance-$k$
KW Game, with an arbitrary number of rounds, has total cost at least
$\frac{k}{2}\log n-O_k(1)$.  Consequently, for every $d\ge1$, a $d$-round
protocol has a message of length at least
$\frac{k}{2d}\log n-O_k(1)$.
\end{theorem}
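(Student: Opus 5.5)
Following the diagonal argument sketched above, I would complete the protocol off its promise. Given any deterministic protocol $\Pi$ for the Distance-$k$ KW Game of total cost $T$, it is a protocol tree on $\{0,1\}^n\times\{0,1\}^n$: each leaf is a combinatorial rectangle $X\times Y$ labeled by a coordinate $i$. Correctness is required only on pairs at distance exactly $k$. Feed the \emph{diagonal} inputs $(x,x)$ into $\Pi$. Each reaches one of at most $2^T$ leaves, so some leaf $X\times Y$ contains the diagonal pairs $(x,x)$ for all $x$ in a class $S\subseteq X\cap Y$ of size at least $2^{n-T}$.

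The key claim is that $S$ contains no three points pairwise at distance $k$. Suppose $x,x',x''\in S$ are pairwise at distance $k$. Since $S\subseteq X$ and $S\subseteq Y$, the rectangle contains all six ordered cross pairs $(x,x')$, $(x',x)$, etc., each at distance exactly $k$. All of them receive the same output coordinate $i$, which must therefore satisfy $x_i\ne x'_i$, $x'_i\ne x''_i$ and $x_i\ne x''_i$. This is impossible over the binary alphabet, since among three bits two agree.

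Lemma~\ref{lem:fixed-distance-estimates}(ii) applies because $k$ is even. It gives $2^{n-T}\le|S|=O_k(2^n/n^{k/2})$, hence $T\ge\frac k2\log n-O_k(1)$. For the consequence, a $d$-round protocol whose messages all have length at most $L$ has total cost at most $dL$, so $L\ge\frac{k}{2d}\log n-O_k(1)$. The constant becomes $O_k(1)/d$, which is still $O_k(1)$.

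The main subtlety I expect is justifying that the rectangle property holds for a promise protocol off the promise. A deterministic protocol is defined by message functions of each party's own input and the transcript so far, so these functions extend to every input. The set of input pairs reaching a given transcript is therefore a rectangle, even though correctness holds only on promise pairs. Beyond that, the only point needing care is that a leaf must be \emph{labeled} by one coordinate. This holds because both parties output the same $i$, determined by the transcript. All the quantitative work is contained in Lemma~\ref{lem:fixed-distance-estimates}(ii), which is where evenness of $k$ enters: three-petal sunflowers with core of size $k/2$ must exist.
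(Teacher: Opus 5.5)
Your proposal is correct and follows the paper's proof essentially step for step: you complete the protocol off the promise, pigeonhole the diagonal inputs into a transcript class $S$ of size at least $2^{n-T}$, use the rectangle property to rule out a triangle in the distance-$k$ graph on $S$, and finish with Lemma~\ref{lem:fixed-distance-estimates}(ii) and averaging over the $d$ messages. The one difference is in your ``labeled by one coordinate'' step, where you assume the output is determined by the transcript. The paper does not assume this: it lets $a(u)$ and $b(v)$ be the two parties' output functions at the shared transcript, and agreement on the six cross pairs of three points $x,y,z\in S$ already forces $a(x)=b(y)=a(z)=b(x)=a(y)=b(z)$.
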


\begin{proof}
Let $T$ be the total cost.  Complete the protocol arbitrarily off its promise
without increasing its cost, and view it as a binary tree of depth at most
$T$.  Among the at most $2^T$ transcripts on diagonal inputs $(x,x)$, one
is shared by a set $S$ of at least $2^{n-T}$ strings.

Suppose that $x,y,z\in S$ have pairwise distance $k$.  The rectangle
property puts all six ordered cross pairs in the same transcript.  If
$a(u)$ and $b(v)$ are the two output functions at this transcript,
correctness on these cross pairs forces all six values
$a(x),a(y),a(z),b(x),b(y),b(z)$ to be one common coordinate $i$.  That
coordinate would have to differ on each of the three pairs $xy,yz,zx$,
which is impossible for the three bits $x_i,y_i,z_i$.  Thus $S$ contains no
such triangle.  Lemma~\ref{lem:fixed-distance-estimates}(ii) gives
$2^{n-T}=O_k(\frac{2^n}{n^{k/2}})$, proving the total-cost lower bound.  The
per-message conclusion follows by averaging over the $d$ messages.
\end{proof}

\begin{proposition}[Odd-distance upper bound]
\label{prop:odd-distance-upper}
For every odd $k$, the Distance-$k$ KW Game has a deterministic protocol of
total cost at most $2\lceil\log n\rceil$.
\end{proposition}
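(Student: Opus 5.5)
The plan is to exploit the fact that when $k$ is odd, the promise $\dist(x,y)=k$ forces $|x|$ and $|y|$ to have opposite parity. The problem then reduces to the KW relation for the parity of the positions with a restricted number of differences, which can be handled by a binary-search-like exchange of $\lceil\log n\rceil$ bits in each direction. Concretely, the protocol will be a "locate the difference" scheme. Put $\ell=\lceil\log n\rceil$ and label the coordinates by distinct strings in $\{0,1\}^\ell$. Alice sends the $\ell$-bit vector
\[
 \sigma(x)=\sum_{j\in\supp(x)}\mathrm{label}(j)\in\F_2^\ell,
\]
the XOR of the labels of her support. Bob computes $\sigma(x)+\sigma(y)=\sum_{j\in\supp(x\oplus y)}\mathrm{label}(j)$. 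This alone does not identify a coordinate when $k\ge3$, so a single syndrome is not enough and I would refine the approach.

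The refinement I would commit to is an interactive binary search on parity, in the style of the standard $O(\log n)$-depth parity KW protocol. Alice and Bob maintain an interval $I\subseteq[n]$ on which the restrictions $x|_I$ and $y|_I$ have different parities. Initially this holds for $I=[n]$, since the total distance $k$ is odd. In each step, they split $I$ into halves $I_1,I_2$. Alice sends the parity of $x|_{I_1}$, and Bob replies with one bit naming a half on which the parities differ. Such a half exists because the parities on $I$ differ. Each halving costs two bits, so after $\ell$ halvings they reach a single coordinate where $x_i\ne y_i$, giving total cost $2\lceil\log n\rceil$. Both parties know the final coordinate because the transcript determines it.

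This protocol does not actually use the distance promise beyond oddness; it is simply the parity KW protocol, valid on the odd-distance promise because that promise lies inside $O_n\times E_n$ up to swapping roles. I would state it that way, noting that if Alice's input is the even one the same protocol works symmetrically. The parity check on $[n]$ needs no communication, since oddness of $k$ guarantees differing parities. The only point needing care is the bookkeeping that the bit count is exactly $2\lceil\log n\rceil$ and not $2\lceil\log n\rceil+O(1)$. The initial interval already has a known parity difference, so no initial message is required, and I expect no real obstacle. The many rounds used here contrast with the two-round lower bound of Theorem~\ref{thm:distance-two-round-lower}, which I would remark on.
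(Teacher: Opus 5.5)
Your committed protocol is correct and is essentially the paper's proof. Both maintain a block on which the restrictions of $x$ and $y$ have opposite parity (guaranteed initially by the oddness of $k$); Alice sends her parity on one half, and Bob names a half on which the parities differ, for $2\lceil\log n\rceil$ bits in total. The only difference is cosmetic: the paper pads with common zeros to a power-of-two length, while you halve $[n]$ directly, which also works if you split into parts of sizes $\lceil |I|/2\rceil$ and $\lfloor |I|/2\rfloor$ so that $\lceil\log n\rceil$ stages suffice.
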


\begin{proof}
Pad the inputs by common zeros so that their length is a power of two.
Maintain a block on which the two restricted strings have opposite parity.
Alice sends the parity of her string on the first half of the block.  Bob
compares it with his parity on that half and sends one bit selecting a half
on which the two parities differ.  After $\lceil\log n\rceil$ stages, the
remaining block is a coordinate on which the inputs differ, known to both
parties.  This uses two one-bit rounds per stage, hence
$2\lceil\log n\rceil$ rounds and bits in total.  It is the usual binary-search
protocol for the odd-parity part of the Universal Relation.
\end{proof}

\begin{remark}[Even and odd distance]
\label{rem:even-odd-distance}
Theorem~\ref{thm:tarui-even-k} is independent of the number of rounds.  For
every fixed even $k>4$, its leading term $\frac{k}{2}\log n$ is larger than the
$2\log n+O(1)$ unrestricted-round upper bound at every odd distance.  There is
no direct transfer between the two arguments: odd distance is a parity
restriction, whereas even-distance pairs lie within the same parity class.
\end{remark}

\subsection{Distance-$k$ Conjectures}

As one of the main conceptual contributions of this work, we advance the following conjecture, which can be seen as a $k$-local version of the classical theorem that $O(1)$-depth circuits for $\PARITY$ require size $n^{\omega(1)}$.

\begin{conjecture}[Distance-$k$ KW Conjecture]
\label{conj:distance-k}
For every $d\ge2$ and $c>0$, there is $k_0=k_0(d,c)$ such that, for every
fixed $k\ge k_0$ and all sufficiently large $n$, the existence of a
$d$-round $m$-ary protocol in the Distance-$k$ KW Game implies $m\ge n^c$.
Equivalently, the conjecture asserts $m=n^{\omega(1)}$ in the regime
$d\ll k\ll n$.
\end{conjecture}

We also state a (top-down) Chopping Game version of this conjecture, which is a formally stronger statement.

\begin{conjecture}[Distance-$k$ Chopping Conjecture]
\label{conj:distance-k-chopping}
For every $d\ge1$ and $c>0$, there is $k_0=k_0(d,c)$ such that, for every
fixed $k\ge k_0$ and all sufficiently large $n$,
\[
 \Chop^{=k}_{d,m}(\{0,1\}^n,\{0,1\}^n)
 \quad\Longrightarrow\quad
 m\ge n^c.
\]
Equivalently, in abbreviated form, the conjecture asserts
$m=n^{\omega(1)}$ in the regime $d\ll k\ll n$.
\end{conjecture}

Conjecture \ref{conj:distance-k} (and its stronger chopping version \ref{conj:distance-k-chopping}) qualitatively strengthen the classical lower bound $\PARITY\notin\ACzero$.  Indeed, for
any fixed depth $d$ and exponent $c$, choose a sufficiently
large odd $k \ge k_0(d,c+1)$.  A depth-$d$, fan-in-$n^c$ circuit for $\PARITY$ would yield a KW protocol in the KW Game, hence also in the Distance-$k$ KW Game and the Distance-$k$ Chopping Game,
contradicting the conjecture.

\section{Lower Bound for Linear Protocols}
\label{sec:linear-distance-k}

A linear message partitions the active input set into intersections with
affine subspaces.  Thus every linear KW protocol gives an affine Partition
Game strategy and, by retaining a largest part, an affine Chopping strategy.
The known upper bounds below already have these forms.  The code
used in Theorem~\ref{thm:distance-two-round-upper} has a linear
two-round variant: Alice and Bob each send the
$k\lceil\log n\rceil$-bit syndrome of their input, after which both recover
$x+y$.  
The standard depth-$d$ circuits for PARITY 
give a $d$-round affine Chopping strategy with
$\log m=O(n^{1/(d-1)})$ for $d\ge2$.  At each step the construction fixes a
pattern of block parities, and hence passes to an affine subspace.  Its final
coordinate separates the two surviving sets globally, so it also wins every
$k$-local version.

We prove both lower bounds by reducing to the exact-weight $\FtwoELEMX$ query
problem.
Fixing Alice's input to $0^n$ turns a $d$-round linear KW protocol into an
algorithm with $q=d-1$ query rounds; some of these query batches may be
empty.  An affine Chopping strategy gives a different reduction, in which
Duplicator keeps the two sides related by an unknown translation and one
query batch simulates each chop.  The resulting exponents are
$k^{1/(d-1)}$ for linear protocols and $k^{1/d}$ for affine strategies,
rather than the linear dependence on $k$ predicted by the conjectures.  The
exact statements first assume that $k$ is the corresponding perfect power.
For the $k$-local affine consequence, a nearby odd perfect power removes
this restriction at the cost of a constant factor in the exponent.

\begin{definition}[Linear KW protocol]
A KW protocol is $\F_2$-linear if, at every transcript $\tau$, its next
message has the form $U_\tau x$ or $U_\tau y$ for a matrix $U_\tau$ over
$\F_2$, according to which party speaks.  It is $L$-bit if every such
matrix has at most $L$ rows.
\end{definition}

We measure the number of bits actually sent, rather than the size of the
message's range on the residual input set.  Accordingly, after padding
matrices with zero rows, an $L$-bit message ranges over $\{0,1\}^L$, so
$m=2^L$.

The \textsc{Element Extraction} ($\ELEMX$) query problem of Chakrabarti and
Stoeckl~\cite{ChakrabartiStoeckl2021} asks for a coordinate in the support of
an unknown nonzero Boolean vector using linear queries.  We use its
exact-weight restriction over $\F_2$.

\begin{definition}[$\FtwoELEMX$]
\label{def:f2-elemx}
In $\FtwoELEMX(k,n)$ the unknown vector $z\in\F_2^n$ has Hamming weight
$k$.  In each query round, the algorithm chooses a batch of vectors
$a\in\F_2^n$, as a function of the answers in earlier rounds, and receives
$\langle a,z\rangle$ for each.  It must output an $i\in\supp(z)$.  The size
of a batch is its number of queries.  A $q$-round algorithm uses at most
$q$ batches; empty batches are allowed.
\end{definition}

We write $\FtwoELEMX(\Odd,n)$ for the variant in which $z$ may have any
odd Hamming weight. 
The search lower
bound in ~\cite[Section~5.2]{Rossman2017Subspace}
with $V\setminus U$ specialized to the even-weight
subspace $E_n\subseteq\F_2^n$ implies that every
deterministic $q$-round algorithm for $\FtwoELEMX(\Odd,n)$ has a batch
of size at least $n^{1/q}-1$. That linear-algebraic argument uses the full
odd-weight promise, and we do not know how to extend it to the exact-weight
problem $\FtwoELEMX(k,n)$. Chakrabarti and Stoeckl give a round-elimination
proof for the odd-weight problem.  The exact-weight promise here requires the
fixed-core sunflower lemma below.

We reserve $q$ for the number of query rounds.  In the reduction from a
$d$-round linear KW protocol below, $q=d-1$: after fixing Alice's input, her
first message is known, and each remaining communication round is represented
by one query batch, possibly empty.

\begin{lemma}[Fixed-core sunflower bound
  {\cite[Theorem~1]{BradacBucicSudakov2023}}]
\label{lem:fixed-core-sunflower}
For every fixed $\ell\ge2$ there is a constant $C_\ell$ such that, for every
integer $t\ge2$, every family
$\mathcal F\subseteq\binom{[N]}{2\ell}$ with more than
$C_\ell N^\ell t^\ell$ members contains $t$ sets of the form
$C\mathbin{\dot\cup}P_1,\ldots,C\mathbin{\dot\cup}P_t$, where
$|C|=|P_1|=\cdots=|P_t|=\ell$ and the petals $P_1,\ldots,P_t$ are pairwise
disjoint.
\end{lemma}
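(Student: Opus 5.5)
This lemma is cited from Brada\v{c}, Buci\'c and Sudakov, so the paper just imports it. If I had to prove it myself, I would use a double-counting argument over cores that reduces to a matching problem in link hypergraphs.

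\emph{Step 1 (choose a good core).} For each $\ell$-set $C\subseteq[N]$, let the link $\mathcal L_C=\{P\in\binom{[N]\setminus C}{\ell}: C\cup P\in\mathcal F\}$. Each member of $\mathcal F$ splits as $C\dot\cup P$ in exactly $\binom{2\ell}{\ell}$ ways, so
\[
\sum_C|\mathcal L_C|=\binom{2\ell}{\ell}|\mathcal F|.
\]
Since there are at most $N^\ell$ cores $C$, some $C$ has $|\mathcal L_C|\ge\binom{2\ell}{\ell}|\mathcal F|/N^\ell>\binom{2\ell}{\ell}C_\ell t^\ell$.

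\emph{Step 2 (find a matching in the link).} The required $t$ sets are exactly a matching of size $t$ in the $\ell$-uniform family $\mathcal L_C$. I would take a maximal matching in $\mathcal L_C$. If it has fewer than $t$ edges, its vertex set $W$ has $|W|<\ell t$ and meets every member of $\mathcal L_C$. The trivial bound from this is $|\mathcal L_C|\le\ell t\,N^{\ell-1}$, which is not good enough.

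\emph{Main obstacle.} Step 1 only guarantees about $t^\ell$ link sets, independent of $N$, whereas a matching-free link can have $\sim tN^{\ell-1}$ members. So an averaging choice of $C$ does not suffice. The true argument must exploit how the links of different cores overlap; this is the content of \cite{BradacBucicSudakov2023}, in the spirit of the Deza--Frankl/F\"uredi fixed-intersection machinery.

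\emph{Repair I would try first.} Apply F\"uredi's kernel theorem \cite{Furedi1983}, already used earlier in the paper, with $p$ of order $t\ell$. This gives a subfamily $\mathcal F^*$ of proportional size in which every pairwise intersection is the core of a sunflower in $\mathcal F$ with many petals. In that regime, petals of a sunflower with a core of size $\ell$ are automatically $\ell$-sets and pairwise disjoint, which is exactly the needed configuration. If no such $\ell$-core exists, the intersection pattern of $\mathcal F^*$ avoids size $\ell$. The Frankl--F\"uredi forbidden-intersection theorem \cite{FranklFuredi1985} then bounds $|\mathcal F^*|$ by $O(N^{\ell})$ (since $\max\{\ell,2\ell-\ell-1\}=\ell$).

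The delicate point, and the real difficulty, is making the dependence on $t$ polynomial, namely $t^\ell$. In the kernel theorem the constant $c_{r,p}$ depends badly on $p\approx t$. So I would need a quantitative version, for instance by iterating with $p=2$ and then greedily extending a sunflower inside the link while deleting the $O(\ell t)$ used vertices. I expect this quantitative step to be the crux, which is why the paper cites it rather than proving it.
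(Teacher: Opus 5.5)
The paper gives no proof of this lemma. It imports it from \cite[Theorem~1]{BradacBucicSudakov2023}, so your opening remark matches the paper's route exactly. Read as a self-contained proof, however, your proposal has a genuine gap, and it sits at the only nontrivial part of the statement: the dependence $t^\ell$.

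Steps~1--2 fail for the reason you give. A link that is only guaranteed $\Omega_\ell(t^\ell)$ members, and even one with up to about $tN^{\ell-1}$ members, may have all its sets meeting a fixed $(t-1)$-set. It then has no $t$ disjoint members.

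Your repair is essentially the argument the paper already uses, just before Lemma~\ref{lem:fixed-distance-estimates}, for the qualitative bound $O_{r,t,p}(n^{\max\{t,r-t-1\}})$. Incidentally, $p=t$ suffices there: petals over an $\ell$-core in a $2\ell$-uniform family are automatically disjoint $\ell$-sets. The repair gives $|\mathcal F|\le O_\ell(N^\ell)/c_{2\ell,t}$, and used as a black box it can never reach $t^\ell$. Every admissible kernel constant satisfies $c_{2\ell,t}\le(t-1)^{-2\ell}$. To see this, take the $(t-1)^{2\ell}$ transversals of $2\ell$ disjoint blocks of size $t-1$. This family contains no $t$-petal sunflower with any core, the empty core included, so $\mathcal F^*$ can have at most one member. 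Hence this route is stuck at order $N^\ell t^{2\ell}$ or worse.

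Where the paper uses this route, $p\in\{2,3\}$ is fixed and the constant is harmless; here $t$ grows with $N$. Your fallback is to take $p=2$ and extend greedily inside a link $\mathcal L_C$ while avoiding the fewer than $\ell t$ used vertices. That only succeeds if $\mathcal L_C$ has no vertex cover of size less than $\ell t$, which is precisely what Step~2 could not supply.

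The exponent $\ell$ is not cosmetic. In Theorem~\ref{thm:f2-elemx-round-lower} the lemma is applied with $\ell=r$ to a fiber of size $\binom{N}{2r}2^{-L}$, with $t$ as large as the hypothesis allows. The factor $t^r$ is exactly what yields $M=\lfloor c_rN2^{-L/r}\rfloor$ disjoint petals, i.e.\ a loss of only $2^{-L/r}$ in the universe size per eliminated round. A bound $N^\ell t^{E}$ with $E>\ell$ gives only $M\approx(N^r2^{-L})^{1/E}$; for $E=2r$ this is about $\sqrt N\,2^{-L/(2r)}$. The universe would then shrink polynomially even when $L=0$, and the round elimination would no longer yield $\frac{k^{1/q}}{q}\log n$.

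So what you have is the citation together with an accurate diagnosis of where the difficulty lies, not a proof. Closing the gap requires the quantitative argument of Brada\v{c}, Buci\'c and Sudakov itself.
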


The lower bound follows by round elimination.  Write $k=r^q$.  For every
batch of $L$ queries, some syndrome is shared by many disjoint $r$-sets.
Replacing each new coordinate by one such set makes the first-round answers
constant, reduces the weight from $r^s$ to $r^{s-1}$, and leaves a universe
of size roughly $N2^{-L/r}$.  The iteration either fails early or ends with
two disjoint supports having the same final transcript; either outcome forces
$qL\ge r\log n-O_k(q)$.

\begin{theorem}[Lower bound for $\FtwoELEMX$]
\label{thm:f2-elemx-round-lower}
Let $q\ge1$, and assume that $k^{1/q}$ is an integer at least $2$.  Every
deterministic $q$-round algorithm for $\FtwoELEMX(k,n)$ has a batch of size
at least
\[
 \frac{k^{1/q}}{q}\log n-O_k(q).
\]
\end{theorem}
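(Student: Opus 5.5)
We prove the bound by round elimination, following the sketch preceding the theorem. Write $r=k^{1/q}\ge2$ and let $L$ be the maximum batch size. We maintain a ``block structure'': at stage $s$ (for $s=q,q-1,\ldots,0$) there is a universe $[N_s]$ of new coordinates. Each new coordinate $j$ is identified with a set $B_j\subseteq[n]$ of size $r^{q-s}$, and these sets are pairwise disjoint. A weight-$r^s$ vector $w$ on $[N_s]$ then lifts to the weight-$k$ vector $z=\sum_{j\in\supp w}\1_{B_j}$. For every such $w$, the answers to the first $q-s$ batches are a fixed, predetermined string. Hence the algorithm restricted to lifted inputs is a $s$-round algorithm for an $\FtwoELEMX(r^s,N_s)$-type problem, except that its output is a coordinate of $[n]$ lying in $\bigcup_{j\in\supp w}B_j$. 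The induced queries on $[N_s]$ are again linear: $\langle a,z\rangle=\sum_{j}w_j\langle a,\1_{B_j}\rangle$. Initially $N_q=n$ and every $B_j$ is a singleton.

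\textbf{Elimination step.} Consider the first remaining batch, which has at most $L$ queries and is fixed because earlier answers are fixed. Map each $r$-subset $P\subseteq[N_s]$ to its syndrome, the vector of answers on $\sum_{j\in P}\1_{B_j}$, which lies in $\F_2^L$. Some syndrome class $\mathcal F$ has size at least $\binom{N_s}{r}2^{-L}$. We want many pairwise disjoint members of $\mathcal F$. A greedy choice works: any member meets at most $r\binom{N_s-1}{r-1}$ others, so $\mathcal F$ contains at least $|\mathcal F|/(r\binom{N_s}{r-1})\gtrsim N_s2^{-L}/r^2$ disjoint sets. This bound loses $2^{-L}$ rather than $2^{-L/r}$, which is the crux discussed below. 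Each new coordinate $j'$ corresponds to one of these disjoint $r$-sets $P_{j'}$, and we set $B'_{j'}=\bigcup_{j\in P_{j'}}B_j$. A weight-$r^{s-1}$ vector $w'$ lifts to a weight-$r^s$ vector $w$. By linearity, the batch answer on $w$ is the sum over $r^{s-1}$ sets of the same syndrome. This sum is determined, since it equals the syndrome times $r^{s-1}\bmod 2$. So the batch's answers become constant, and this eliminates one round.

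\textbf{Termination.} After $q$ eliminations, every answer is fixed, so the output $i$ is a constant. The output must lie in $B'_j$ for the unique $j$ in the support of a weight-$1$ vector. If $N_0\ge2$, two different choices of $j$ give disjoint $B'_j$ and yield a contradiction. So some stage must have $N_s<2$ (or fewer than $r^{s}$ usable coordinates); this is the case where the iteration ``fails early''. That forces the product of shrink factors to be at most $O_k(1)$ relative to $n$.

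\textbf{Main obstacle.} The naive greedy step gives $N_{s-1}\approx N_s2^{-L}/r^2$. This yields only $qL\ge\log n-O_k(q)$, losing the factor $r$. To gain the factor, we use Lemma~\ref{lem:fixed-core-sunflower} with $2\ell$-sets instead. The key trick is to work with pairs: the answers on $C\cup P_a$ and $C\cup P_b$ agree, so the answers on $P_a$ and $P_b$ agree. Applying this to a syndrome class of $2r$-sets, of size $\approx N_s^{2r}2^{-L}$, a class exceeding $C_rN_s^rt^r$ yields $t$ disjoint $r$-petals with a common syndrome. We get this with $t\approx N_s2^{-L/r}$, so the universe shrinks only to $N_{s-1}\gtrsim N_s2^{-L/r}/O_k(1)$. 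Iterating over $q$ rounds, survival to the end requires $n\,2^{-qL/r}\le O_k(1)^q$. Since surviving to the end is contradictory, the elimination must fail early, which happens only when $n\,2^{-qL/r}\le O_k(1)^q$. Thus $qL\ge r\log n-O_k(q)$, giving a batch of size at least $\frac{k^{1/q}}{q}\log n-O_k(q)$.

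The delicate checks are that the common core is discarded without harm, since only the petals are used, and that disjointness from earlier blocks is maintained.
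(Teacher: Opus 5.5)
Your proposal is correct and is essentially the paper's proof: round elimination in which each batch is neutralized by applying the fixed-core sunflower lemma with $\ell=r$ to a syndrome fiber of $2r$-sets and cancelling the common core. This yields about $c_rN2^{-L/r}$ pairwise disjoint $r$-sets with equal syndrome, so any $r^{s-1}$ of them give the constant answer $r^{s-1}$ times that syndrome, the universe shrinks by only a factor $2^{-L/r}/O_k(1)$ per round, and $qL\ge r\log n-O_k(q)$ follows. The only differences are cosmetic: you eliminate all $q$ rounds down to weight $1$, whereas the paper eliminates $q-1$ rounds and then exhibits two disjoint $r$-sets with the same final transcript; and your sentence ``survival to the end requires $n\,2^{-qL/r}\le O_k(1)^q$'' has the condition backwards (failure, not survival, requires it), though your next sentence states the logic correctly.
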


\begin{proof}
Put $r=k^{1/q}$.  We first record the one consequence of
Lemma~\ref{lem:fixed-core-sunflower} that will be used repeatedly.  Let
$U$ be a matrix with at most $L$ rows and $N$ columns.  Among the
$2r$-subsets of $[N]$, some fiber of the map $S\mapsto U\mathbf 1_S$ has
at least $\binom{N}{2r}2^{-L}$ members.  Choose $c_r>0$ sufficiently small,
depending only on $r$.  Applying the lemma to this fiber and cancelling its
common core shows that there are
\[
 M=\left\lfloor c_rN2^{-L/r}\right\rfloor
\]
pairwise disjoint $r$-sets $P_1,\ldots,P_M$ such that
$U\mathbf 1_{P_1}=\cdots=U\mathbf 1_{P_M}$, provided $M\ge2$.

Suppose that every batch of a $q$-round algorithm has size at most $L$.
Consider a stage at which the hidden vector has weight $r^s$ on a universe
of size $N$, with $s\ge2$ rounds remaining.  Apply the preceding observation
to the first query matrix.  If $M\ge r^{s-1}$, restrict the hidden vectors
to those of the form
\[
 z(W)=\sum_{j\in W}\mathbf 1_{P_j},
 \qquad W\in\binom{[M]}{r^{s-1}}.
\]
Their first-round answers are all the same.  Every later query remains a
linear query in $\mathbf 1_W$, and an output coordinate in $\supp(z(W))$
belongs to a unique petal and hence identifies an element of $W$.  We have
therefore eliminated the first round, obtaining an $(s-1)$-round algorithm
for $\FtwoELEMX(r^{s-1},M)$.

Repeat this reduction for the first $q-1$ rounds.  At every successful
step the new universe size satisfies
\[
 N'\ge \gamma_rN2^{-L/r}
\]
for a constant $\gamma_r>0$.  If the reduction first fails after $j$ steps,
where $0\le j\le q-2$, because there are fewer than the required number of
petals, then
\[
 N\ge\gamma_r^j n2^{-jL/r}
 \quad\text{and}\quad
 c_rN2^{-L/r}<r^{q-j-1}+1.
\]
Since $r\le k$ and the constants $c_r,\gamma_r$ depend only on $r$, the
loss accumulated in at most $q$ reductions is $O_k(q)$.  Consequently,
\[
 (j+1)L\ge r\log n-O_k(q).
\]
As $j+1\le q$, this implies
\[
 L\ge \frac rq\log n-O_k(q),
\]
and the desired bound follows.

It remains to consider the case in which all $q-1$ reductions succeed.
They leave a one-round algorithm for weight $r$ on a universe of size
\[
 N\ge\gamma_r^{q-1}n2^{-(q-1)L/r}.
\]
If $\lfloor c_rN2^{-L/r}\rfloor\ge2$, there are two disjoint $r$-sets
with the same final transcript.  The algorithm would have to give the same
answer on both, which is impossible.  Thus $c_rN2^{-L/r}<2$, and the last
display gives
\[
 qL\ge r\log n-O_k(q).
\]
After division by $q$, this is stronger than the stated bound.
\end{proof}

\begin{theorem}[Linear protocols]
\label{thm:linear-distance}
Let $d\ge2$, and assume that $k^{1/(d-1)}$ is an integer at least $2$.
Every $d$-round linear protocol for the Distance-$k$ KW Game has a message
of length at least
\[
 \frac{k^{1/(d-1)}}{d-1}\log n-O_k(1).
\]
In particular, the same expression is a lower bound on total
communication.
\end{theorem}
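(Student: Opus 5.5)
We reduce Theorem~\ref{thm:linear-distance} to Theorem~\ref{thm:f2-elemx-round-lower} with $q=d-1$, following the reduction sketched at the start of this section.

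\textbf{Step 1: fix Alice's input.} Take an Alice-first $d$-round linear protocol with every message of at most $L$ bits. (If Bob speaks first, swap the roles of the parties; the argument is symmetric.) Fix Alice's input to $x=0^n$. Bob's admissible inputs are then exactly the vectors $y=z$ of weight $k$, since $\dist(0^n,z)=|z|$. Alice's messages are $U_\tau x=0$ at every transcript $\tau$, so they are known in advance and carry no information. Bob's messages are $U_\tau z$, where $\tau$ is the transcript so far. Because Alice's messages are constant, $\tau$ is determined by Bob's earlier answers.

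\textbf{Step 2: build the query algorithm.} Simulate the protocol by an adaptive linear query algorithm for $\FtwoELEMX(k,n)$. Each of Bob's messages becomes one query batch: its queries are the rows of $U_\tau$, chosen as a function of earlier answers, and the batch has at most $L$ queries. The protocol's final output is a coordinate $i$ with $x_i\ne y_i$. Since $x=0^n$, this means $z_i=1$, so $i\in\supp(z)$, and the algorithm outputs $i$. Among the $d$ alternating messages Alice sends the first, so Bob sends at most $\lfloor d/2\rfloor\le d-1$ messages. Padding with empty batches gives a $q$-round algorithm with $q=d-1$.

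\textbf{Step 3: apply the lower bound.} By hypothesis $k^{1/(d-1)}$ is an integer at least $2$, so Theorem~\ref{thm:f2-elemx-round-lower} applies with $q=d-1$. It gives
\[
L\ge \frac{k^{1/(d-1)}}{d-1}\log n-O_k(d-1).
\]
Since $d\le k$, the error term is $O_k(1)$. The message attaining this length is one of Bob's messages, and its length is at most the total communication, which gives the final claim.

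\textbf{Main obstacle.} The only delicate point is the orientation convention. If the protocol is Bob-first, fix Bob's input to $0^n$ and let Alice's messages become the queries; this gives at most $\lceil d/2\rceil\le d-1$ batches whenever $d\ge2$. The round count $q=d-1$ is therefore valid in both cases, and the rest is bookkeeping.
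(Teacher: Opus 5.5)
Your proposal is correct and follows the paper's proof. You fix Alice's input to $0^n$ so that her linear messages are constant, and you read Bob's messages $U_\tau z$ as adaptive query batches, padded with empty batches to $q=d-1$ rounds. You then apply Theorem~\ref{thm:f2-elemx-round-lower}, and the $O_k(d-1)$ term is absorbed because $k=r^{d-1}\ge 2^{d-1}\ge d$. One point deserves an explicit sentence, as the paper gives it: the query algorithm can compute the common output because Alice's output depends only on the transcript and her fixed input $0^n$.
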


\begin{proof}
Let the protocol use at most $L$ bits in each round.  On an input
$z\in\F_2^n$ of weight $k$, run it on the promised pair $(x,y)=(0^n,z)$.
Alice's first message is fixed.  Thereafter each message of Bob is a batch
of at most $L$ linear queries in $z$, while each message of Alice is known
from the transcript.  Alice's final output depends only on the transcript
and her fixed input.  By agreement and correctness of the protocol, it lies
in $\supp(z)$.  Regard each of the remaining $d-1$ communication rounds as
one query round, using an empty batch whenever Alice speaks.  This gives a
$(d-1)$-round algorithm for $\FtwoELEMX(k,n)$ with at most $L$ queries in
every batch.  Applying Theorem~\ref{thm:f2-elemx-round-lower} with $q=d-1$
gives the result.  Indeed, if $k=r^{d-1}$ with $r\ge2$, then
$d-1\le\log k$, so the resulting $O_k(d-1)$ term is $O_k(1)$.
\end{proof}

\begin{definition}[Affine Spoiler]
A Spoiler strategy is \emph{affine} if every chop of the active side $A$
has the form $A'=A\cap(a+V)$ for an affine subspace
$a+V\subseteq\F_2^n$.
\end{definition}

For affine Spoiler strategies, we reduce directly to $\FtwoELEMX$, without
using a KW protocol.  Given a hidden vector $h$, Duplicator maintains two
affine sets related by translation by $h$.  A chop of relative codimension
$s$ requires only $s$ linear queries to determine the corresponding
translate on the other side.  Thus one query batch simulates each Chopping
round.

\begin{theorem}[Affine Spoiler]
\label{thm:affine-spoiler}
Let $d\ge1$, and assume that $k^{1/d}$ is an integer at least $2$.  If an
affine Spoiler strategy witnesses
$\Chop^{=k}_{d,m}(\{0,1\}^n,\{0,1\}^n)$, then
\[
 \log m\ge \frac{k^{1/d}}{d}\log n-O_k(1).
\]
If $k$ is odd, the same conclusion holds when
$\Chop^{=k}_{d,m}(\{0,1\}^n,\{0,1\}^n)$ is replaced by
$\Chop^{\le k}_{d,m}(O_n,E_n)$.
\end{theorem}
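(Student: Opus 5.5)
The plan is to reduce an affine Spoiler strategy to a $d$-round algorithm for $\FtwoELEMX(k,n)$ whose batches have at most $\lceil\log m\rceil$ queries. Then I apply Theorem~\ref{thm:f2-elemx-round-lower} with $q=d$. Since $k=r^d$ with $r\ge2$, we have $d\le\log k$, so the $O_k(q)$ error term becomes $O_k(1)$, which gives the claimed bound.

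\emph{Duplicator's translation strategy.} Fix a hidden $h$ of weight $k$. Duplicator keeps the invariant that the current position $(A,B)$ consists of two affine subspaces with $B=A+h$, or with the roles interchanged after a move. The invariant holds initially, since $\{0,1\}^n+h=\{0,1\}^n$.

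Suppose Spoiler chops the active side $A=a+V$ to $A'=A\cap(a'+V')$. This set is again affine. Since $|A'|\ge|A|/m$ and $A'$ is nonempty, $A'$ has codimension $s\le\log m$ inside $A$. So we can write $A'=A\cap\{x:Ux=c\}$ for a matrix $U$ with $s$ rows, and we may choose these rows linearly independent modulo the annihilator of $V$. Duplicator answers with $B'=A'+h\subseteq B$, that is,
\[
 B' = B\cap\{y:Uy=c+Uh\}.
\]
Computing $B'$ therefore requires exactly the $s$ linear queries $Uh$. The chop depends only on the current position, and Spoiler's continuation depends only on $B'$. Hence the whole play is determined adaptively by the query answers, with one batch of at most $\lceil\log m\rceil$ queries per round, for $d$ rounds. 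When Spoiler chops the other side, the same argument works with $h$ subtracted instead of added, which is the same thing over $\F_2$.

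\emph{Extracting the answer.} After at most $d$ rounds the position $(A_d,B_d)$ is a Spoiler win, and it still satisfies $B_d=A_d+h$. Every chop keeps at least a $1/m$ fraction, so both sides are nonempty. The algorithm picks a canonical $x\in A_d$, which depends only on the transcript, and outputs $i(x)$ given by the distance-$k$ win condition. The point $y=x+h$ lies in $B_d$ at distance exactly $k$, so $x_{i(x)}\ne y_{i(x)}$, that is, $i(x)\in\supp(h)$. This is a correct $d$-round $\FtwoELEMX(k,n)$ algorithm.

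\emph{The odd-$k$, $k$-local variant.} Start instead from $(O_n,E_n)$. For odd-weight $h$ we have $E_n=O_n+h$, and $O_n$ is affine. Thus the same invariant and simulation go through. At the end, $y=x+h\in B_d$ satisfies $1\le\dist(x,y)=k\le k$, so the $k$-local win condition again forces $i(x)\in\supp(h)$.

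\emph{Where care is needed.} The delicate step is the codimension accounting. Using the fact that both sides remain affine subspaces, I must check that a chop of density at least $1/m$ costs at most $\lfloor\log m\rfloor$ queries. I must also check that the transcript, and not $h$ itself, determines Spoiler's continuation and the final output. Both follow because $B'$ is determined by the answers $Uh$ together with the previous position.
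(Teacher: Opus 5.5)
Your proposal is correct and follows the paper's proof essentially step for step. Duplicator maintains the translate invariant $B=A+h$, and each affine chop of relative codimension $s\le\log m$ is simulated by $s$ queries to $h$ (functionals spanning $W^\perp$ modulo $V^\perp$); the witness coordinate for a canonical point $x$ against $x+h$ then lies in $\supp(h)$, Theorem~\ref{thm:f2-elemx-round-lower} is applied with $q=d$ and $d\le\log k$, and the odd-$k$ local variant is handled identically via $E_n=O_n+h$ for odd-weight $h$.
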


\begin{proof}
Let $h\in\F_2^n$ be the hidden vector in $\FtwoELEMX(k,n)$.  We simulate
the affine strategy while choosing Duplicator's responses so that the two
surviving sets are translates by $h$.  Thus the current ordered position
has the form
\[
 A=a+V,
 \qquad
 B=A+h.
\]
This holds initially for the two copies of $\{0,1\}^n$.  It also holds for
$(O_n,E_n)$ when $k$ is odd, since translation by $h$ swaps parity.

Suppose that Spoiler replaces $A$ by the affine subcoset $A'=a'+W$.  Its
relative density in $A$ is $2^{-s}$, where
$s=\operatorname{codim}_V(W)$, and legality gives $s\le\log m$.  The
simulation already knows the coset $h+V$.  Choose linear functionals
$\lambda_1,\ldots,\lambda_s\in W^\perp$ whose images form a basis of
$W^\perp/V^\perp$, and query
$\langle\lambda_1,h\rangle,\ldots,\langle\lambda_s,h\rangle$.  Together
with the values of the functionals in $V^\perp$, which are determined by
$h+V$, these answers determine every functional in $W^\perp$ and hence the
coset $h+W$.  Duplicator may therefore respond with $B'=A'+h$.  After the
order is reversed, the new position is $(B',A')$; since $A'=B'+h$ over
$\F_2$, the same translate invariant continues.

When Spoiler's stopping condition holds, choose the least $x$ on the first
side.  The point $x+h$ lies on the opposite side and is at distance $k$
from $x$.  Under either stopping rule, a witnessing coordinate $i$ for
$x$ satisfies $x_i\ne(x+h)_i$, and hence $h_i=1$.  Take the least such $i$.
We have constructed an algorithm for $\FtwoELEMX(k,n)$ using at most $d$
query rounds and at most $\log m$ queries in each batch.  Applying
Theorem~\ref{thm:f2-elemx-round-lower} with $q=d$ gives an $O_k(d)$
remainder.  Since $k=r^d$ with $r\ge2$, we have $d\le\log k$, and this
remainder is $O_k(1)$.
\end{proof}

\begin{corollary}[Affine $k$-local games]
\label{cor:affine-spoiler-all-k}
For every $d\ge1$ and $k\ge3^d$, and all sufficiently large $n$, an affine
winning strategy in the $d$-round $k$-local $m$-Chopping Game for parity
requires $m\ge n^{\Omega(k^{1/d}/d)}$.
\end{corollary}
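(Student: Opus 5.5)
We reduce the corollary to the second part of Theorem~\ref{thm:affine-spoiler} by replacing $k$ with a slightly smaller odd perfect $d$-th power. Given $d\ge1$ and $k\ge3^d$, set
\[
 r=\text{the largest odd integer with } r\le k^{1/d}, \qquad k'=r^d .
\]
Since $k^{1/d}\ge3$, we have $r\ge3\ge2$ and $r\ge k^{1/d}-2\ge k^{1/d}/3$. Also $k'$ is odd, $(k')^{1/d}=r$ is an integer at least $2$, and $k'\le k$.

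First we check monotonicity in the locality radius. Suppose Spoiler wins the $k$-local game from a position $(A',B')$. For each $x\in A'$ there is a witnessing coordinate $i(x)$ with $x_i\ne y_i$ for every $y\in B'$ with $1\le\dist(x,y)\le k$. This set of $y$ contains those with $1\le\dist(x,y)\le k'$, and the symmetric statement holds for points of $B'$. So every $k$-local terminal position is also $k'$-local terminal. The game tree, the legal chops, the affine restriction and the density requirement do not depend on the radius. Hence an affine winning strategy for $\Chop^{\le k}_{d,m}(O_n,E_n)$ is also an affine winning strategy for $\Chop^{\le k'}_{d,m}(O_n,E_n)$.

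Since $k'$ is odd, Theorem~\ref{thm:affine-spoiler} now applies and gives
\[
 \log m\ge \frac{r}{d}\log n-O_{k'}(1)\ge \frac{k^{1/d}}{3d}\log n-O_k(1).
\]
The error term is $O_k(1)$ because $k'\le k$ ranges over finitely many values once $k$ is fixed. For $n$ sufficiently large in terms of $k$, the additive constant is at most half the main term, so $m\ge n^{k^{1/d}/(6d)}=n^{\Omega(k^{1/d}/d)}$.

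The "parity" game should be read in either orientation, $(O_n,E_n)$ or $(E_n,O_n)$. Both are covered by the same simulation: translation by an odd-weight $h$ swaps $O_n$ and $E_n$, so the translate invariant $B=A+h$ holds initially in either order. The only point needing care is this monotonicity step and the choice of an odd perfect power. The odd-$k$ requirement of Theorem~\ref{thm:affine-spoiler} is what forces us to round down to an odd $r$, and this is why the hypothesis $k\ge3^d$ appears, guaranteeing that such an $r\ge3$ exists. There is no real obstacle beyond that.
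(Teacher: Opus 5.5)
Your proposal is correct and follows the paper's proof: take the largest odd $r$ with $r^d\le k$, so that $r\ge k^{1/d}/3$ because $k\ge 3^d$. Note that a $k$-local winning strategy is also an $r^d$-local one, since the terminal condition only weakens as the radius shrinks, and then apply Theorem~\ref{thm:affine-spoiler} at the odd distance $r^d$. Your extra bookkeeping (the bound $r\ge k^{1/d}-2$, absorbing the $O_{r^d}(1)$ term into $O_k(1)$, and the remark on orientation) fills in details that the paper's three-line proof leaves implicit.
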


\begin{proof}
Let $r$ be the largest odd integer such that $r^d\le k$.  Then
$r\ge\frac{k^{1/d}}{3}$.  A winning strategy for the $k$-local game is also a
winning strategy for the $r^d$-local game.  Apply
Theorem~\ref{thm:affine-spoiler} with the odd distance $r^d$.
\end{proof}

\section{Graph-Theoretic Conjecture}
\label{sec:graph-analogue}

The preceding lower bounds use the vector-space structure of $\F_2^n$.  We
now consider a version of the Distance-$k$ question in which this structure
is absent.  The graph conjecture below is not formally easier than the
$k$-local Chopping Conjecture, but it isolates a simpler local problem.  A
lower-bound method for this problem may lead to fundamentally new circuit
lower-bound techniques.

Let $G$ be a connected graph and let $k\ge1$.

\begin{definition}[Graph Distance-$k$ Relation]
\label{def:graph-distance-relation}
Alice receives $x\in V(G)$ and Bob receives $y\in V(G)$ with
$\dist_G(x,y)=k$.  Alice outputs a neighbor $x'$ of $x$ such that
$\dist_G(x',y)=k-1$, and Bob outputs a neighbor $y'$ of $y$ such that
$\dist_G(y',x)=k-1$.  The two outputs are separate local directions; a
protocol for this relation need not give them a common label.
\end{definition}

In $Q_n$, the coordinate labels globally identify parallel edges, and the
Distance-$k$ KW Game requires the parties to choose the same label.  A
general regular graph has no intrinsic identification between edges at
different vertices.  The relation above discards this requirement and is
therefore a relaxation of the literal edge-labeled analogue, but it depends
only on the graph metric.

\begin{definition}[Graph Distance-$k$ Chopping Game]
\label{def:graph-distance-k-chopping}
The moves are those of Definition~\ref{def:chopping-game}, with role-labeled
sets $A,B\subseteq V(G)$.  Spoiler may stop and win from $(A,B)$ if, for
every $x\in A$, there is a neighbor $x'(x)$ satisfying
$\dist_G(x'(x),y)=k-1$ for every $y\in B$ at distance $k$ from $x$, and,
symmetrically, for every $y\in B$ there is a neighbor $y'(y)$ satisfying
$\dist_G(y'(y),x)=k-1$ for every $x\in A$ at distance $k$ from $y$.  We
write $\Chop^{G,=k}_{d,m}(A,B)$ when Spoiler can force this condition within
$d$ rounds.
\end{definition}

This is the zero-message rule for the graph relation.  The largest-part
argument therefore turns an $m$-ary $d$-round protocol into a winning graph
Chopping strategy with the same parameters, starting from two role-labeled
copies of $V(G)$.

\subsection{Local hypercubes}

The same local games may be played on graphs whose radius-$k$ neighborhoods
are isomorphic to those of the cube.  Let $Q_n$ denote the
$n$-dimensional hypercube graph on $\{0,1\}^n$.

\begin{definition}
\label{def:k-local-n-hypercube}
A finite connected $n$-regular graph $G$ is a
\emph{$k$-local $n$-hypercube} if, for every $v\in V(G)$, the subgraph
induced by the radius-$k$ ball about $v$ is isomorphic, as a rooted graph,
to the radius-$k$ ball about $0^n$ in $Q_n$.
\end{definition}

These are the $Q_n^k$-like graphs introduced by Klav\v{z}ar, Koolen, and
Mulder \cite[Theorem~2.1]{KlavzarKoolenMulder1999}.  For $k\ge3$, the finite
connected examples are exactly the normal quotients $(Q_n)_K$ (the graphs
on the $K$-orbits) by subgroups
$K\le\Aut(Q_n)$ whose nonidentity elements move every vertex by at least
$2k+2$ \cite[Theorem~1.2 and Corollary~1.3]{Fawcett2016}.  Translation
subgroups give Cayley quotients, but general $K$ may also permute
coordinates, and the quotient need not even be vertex-transitive
\cite[Example~3.12]{Fawcett2016}.  We use only the following Cayley
construction.

\begin{proposition}[Cayley quotients]
\label{prop:cayley-quotients}
Let $V$ be a finite-dimensional vector space over $\F_2$, and let
$g_1,\ldots,g_n\in V$ span $V$ and be $(2k+1)$-wise linearly independent,
meaning that every nonempty sum of at most $2k+1$ of the generators is
nonzero.  Then $\operatorname{Cay}(V,\{g_1,\ldots,g_n\})$ is a connected
$k$-local $n$-hypercube.
\end{proposition}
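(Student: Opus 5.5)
The plan is to exhibit, for each vertex $v\in V$, an explicit rooted isomorphism from the radius-$k$ ball about $0^n$ in $Q_n$ onto the radius-$k$ ball about $v$ in $G=\operatorname{Cay}(V,\{g_1,\ldots,g_n\})$. Since $G$ is a Cayley graph, translation by $v$ is an automorphism, so it suffices to treat $v=0$. Consider the linear map $\phi:\F_2^n\to V$, $\phi(x)=\sum_i x_ig_i$. This is a graph homomorphism $Q_n\to G$, since $\phi(x+e_i)=\phi(x)+g_i$. It is surjective because the $g_i$ span $V$, so $G$ is connected. Also, $G$ is $n$-regular with no loops or multi-edges, since the $g_i$ are nonzero and distinct (take sums of one and two generators, using $2k+1\ge 2$).

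The key steps are as follows.

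\emph{Injectivity on the ball.} If $|x|,|y|\le k$ and $\phi(x)=\phi(y)$, then $\phi(x+y)=0$ with $|x+y|\le 2k$. The independence hypothesis forces $x=y$. So $\phi$ is injective on $B_{Q_n}(0,k)$.

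\emph{Distance preservation.} I will show $\dist_G(0,\phi(x))=|x|$ for $|x|\le k$. Clearly $\dist_G(0,\phi(x))\le |x|$. Conversely, if $\phi(x)=\phi(y)$ with $|y|<|x|$, then $|x+y|\le 2k-1$ and $x+y\neq 0$, a contradiction. Hence $\phi$ maps $B_{Q_n}(0,k)$ bijectively onto $B_G(0,k)$: surjectivity holds because every vertex within distance $k$ is a sum of at most $k$ generators, hence equals $\phi$ of a vector of weight at most $k$.

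\emph{Non-edges are preserved.} Edges map to edges because $\phi$ is a homomorphism. The main point is the converse: if $|x|,|y|\le k$ and $\phi(x)+\phi(y)=g_j$ for some $j$, then $\phi(x+y+e_j)=0$ with $|x+y+e_j|\le 2k+1$. By $(2k+1)$-wise independence this gives $x+y=e_j$, so $xy$ is an edge of $Q_n$. This is exactly where the bound $2k+1$, rather than $2k$, is needed, and I expect it to be the only delicate step. The remaining care is in checking the weight counts, noting that the sum is taken with multiplicity over $\F_2$, so cancelled coordinates only reduce the weight.

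Combining these steps, $\phi$ restricts to a rooted isomorphism of induced subgraphs $B_{Q_n}(0^n,k)\to B_G(0,k)$, and translating by $v$ completes the proof.
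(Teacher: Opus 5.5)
Your proof is correct and follows essentially the same route as the paper: the map $I\mapsto g_I$ (your $\phi$ restricted to vectors of weight at most $k$) is a bijection onto the ball by the $2k$-wise independence, and applying $(2k+1)$-wise independence to $(I\mathbin\triangle J)\mathbin\triangle\{\ell\}$ shows that adjacency is reflected. Your separate distance-preservation step is harmless but not needed, since a root-preserving bijection that preserves both adjacency and non-adjacency already gives the required rooted isomorphism of induced subgraphs.
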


\begin{proof}
The generators are distinct and nonzero, so the Cayley graph is $n$-regular;
spanning gives connectedness.  By translation it is enough to consider the
ball about $0$.  Every vertex in this ball has the form
$g_I=\sum_{i\in I}g_i$ for some $I\subseteq[n]$ with $|I|\le k$.  If
$g_I=g_J$, then $I\mathbin\triangle J$ gives a linear dependence of size at
most $2k$, so $I=J$.  If $g_I$ and $g_J$ are adjacent, then
$g_I+g_J=g_\ell$ for some $\ell\in[n]$.  The set
$(I\mathbin\triangle J)\mathbin\triangle\{\ell\}$ has size at most $2k+1$
and sums to zero.  It must therefore be empty, which says that
$I\mathbin\triangle J=\{\ell\}$.  The converse is immediate.  Thus
$I\mapsto g_I$ identifies the induced ball with the radius-$k$ ball in
$Q_n$.
\end{proof}

\subsection{High-girth regular graphs}

Local hypercubes retain the full radius-$k$ geometry of the cube.  The
high-girth model retains only the uniqueness of length-$k$ geodesics.  Let
$G$ be a finite connected $n$-regular graph with $\girth(G)>2k$.  Two
vertices at distance $k$ have a unique shortest path, since two such paths
would contain a cycle of length at most $2k$.

\begin{proposition}[Two-round coloring protocol]
\label{prop:graph-coloring-protocol}
Suppose that $\kappa:V(G)\to[M]$ assigns different colors to the endpoints
of every path of length $2k$.  Then the graph Distance-$k$ relation has a
two-round $M$-ary protocol.  Such a coloring exists with
\[
 M\le n(n-1)^{2k-1}+1.
\]
Consequently, every message has length at most $2k\log n+O(1)$ bits.
\end{proposition}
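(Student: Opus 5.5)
The plan has two independent parts: the protocol built from $\kappa$, and the existence of $\kappa$ via greedy coloring of a power graph.

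\emph{Protocol.} Alice sends $\kappa(x)$, which is one of $M$ values. Bob then knows $y$, and because $\dist_G(x,y)=k$, the vertex $x$ lies in the sphere $S_k(y)$. Any two distinct $x_1,x_2\in S_k(y)$ are joined by a walk of length $2k$ through $y$. Since $\girth(G)>2k$, the geodesics from $y$ to $x_1$ and from $y$ to $x_2$ are unique. They share an initial segment and then diverge, so this walk is not a path only if the two geodesics overlap. In that case the two endpoints are at even distance $2j<2k$ from each other. To cover this case I will strengthen the coloring requirement slightly: $\kappa$ must separate the endpoints of every \emph{non-backtracking walk} of length $2k$, or, simpler, any two distinct vertices at distance at most $2k$. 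The overlapping case is then also handled, and the counting bound below is unchanged. With this requirement, $\kappa$ is injective on $S_k(y)$, so Bob recovers $x$ from $\kappa(x)$. He computes the unique geodesic from $y$ to $x$ and outputs its first step $y'$. He then sends Alice the index of the neighbor of $x$ on that geodesic. To make this index meaningful, fix in advance a local port labeling of the neighbors of each vertex by $[n]$; Bob sends the port label of $x'$ at $x$. That message takes $n\le M$ values, and Alice outputs $x'$. Both outputs are correct because they lie on the unique shortest path.

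\emph{Existence of $\kappa$.} I will color greedily in the graph $H$ on $V(G)$ where $u\sim v$ if $u\ne v$ and $\dist_G(u,v)\le 2k$, or more precisely if $u,v$ are endpoints of a path of length $2k$. The degree of $H$ is at most the number of non-backtracking walks of length between $1$ and $2k$ from a vertex. A cleaner bound counts only paths of length exactly $2k$, which gives $n(n-1)^{2k-1}$ endpoints. Greedy coloring then uses at most $\Delta(H)+1\le n(n-1)^{2k-1}+1$ colors, matching the statement exactly. This suggests the authors intend the literal hypothesis, paths of length exactly $2k$. I therefore need to check that the literal hypothesis suffices for the protocol, and I expect this to be the main subtlety. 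Suppose $x_1\ne x_2\in S_k(y)$ and their geodesics from $y$ share a prefix of length $j\ge1$. Then the concatenated walk backtracks and is not a path, so the literal hypothesis does not directly apply.

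To repair this, I will use the fact that the geodesics are unique and coincide up to a branch vertex $w$ at distance $j$ from $y$. Then $\dist(w,x_1)=\dist(w,x_2)=k-j$, and $\dist(x_1,x_2)=2(k-j)\le 2k$ via a genuine path through $w$. A path of length $2(k-j)$ can be extended to a path of length exactly $2k$ only by lengthening an end, which changes the endpoints, so this does not help. The simplest fix is therefore to color the graph ``distance $\le 2k$''. Its degree is at most $\sum_{i=1}^{2k} n(n-1)^{i-1}$, which for $n\ge 3$ is at most $2n(n-1)^{2k-1}$. That bound still gives $2k\log n+O(1)$ bits, which is the consequence that matters. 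If I must match the displayed bound exactly, I will instead exploit that Bob knows $y$: he needs only to distinguish $x$ within $S_k(y)$ given $\kappa(x)$. Colliding pairs in $S_k(y)$ sharing a prefix are endpoints of walks of length $2k$ through $y$. The number of such walks starting from a fixed $x_1$ is at most $n(n-1)^{2k-1}$ in total, counting over all $y$ and all branch points, because each is a length-$2k$ walk that backtracks at most at the single point $y$. Greedy coloring against this conflict graph then yields the stated $M$. Finally, $\log M\le 2k\log n+O(1)$ gives the message-length claim.
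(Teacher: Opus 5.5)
Your existence argument for $\kappa$ matches the paper: greedy coloring of the graph joining endpoints of length-$2k$ paths, whose maximum degree is at most $n(n-1)^{2k-1}$. The protocol part, however, has a genuine gap.

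You require Bob to recover $x$, which forces $\kappa$ to be injective on every sphere $S_k(y)$. You correctly see that the stated hypothesis does not give this, and you respond by changing the hypothesis to a proper coloring at distance $\le 2k$. That proves a different statement. The first claim of the proposition, that \emph{any} $\kappa$ separating endpoints of length-$2k$ paths yields a two-round $M$-ary protocol, is never established. The bound $M\le n(n-1)^{2k-1}+1$ is also not attached to the protocol you actually run.

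The missing idea is that Bob needs only the first edge of the geodesic from $y$ to $x$, not $x$ itself. Suppose $z_1,z_2\in S_k(y)$ share a color and their geodesics from $y$ leave along different edges. Because $\girth(G)>2k$, the two geodesics meet only at $y$. Their union is then a path of length $2k$ with equally colored endpoints, contradicting the hypothesis. Hence all vertices of $S_k(y)$ with color $\kappa(x)$ lie in one branch at $y$, and Bob can output $y'$. The shared-prefix collisions that worried you are harmless, because they produce the same answer. Since Bob still does not know $x$, his second message cannot be the port of $x'$ at $x$. It should be $\kappa(y)$, after which Alice finds $x'$ by the symmetric argument. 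No port labeling is needed.

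Your fallback attempt to reach the exact bound through injectivity is also incorrect. Walks of length $2k$ from $x_1$ that are non-backtracking except possibly at the midpoint number $n^2(n-1)^{2k-2}$, not $n(n-1)^{2k-1}$. For $n\ge3$, the injectivity conflicts of $x_1$ include every vertex at distance $2,4,\ldots,2k$ from it. So for $k\ge2$ and $\girth(G)>4k$, that conflict graph has degree $\sum_{i=1}^{k}n(n-1)^{2i-1}>n(n-1)^{2k-1}$, and greedy coloring does not give the displayed bound.

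Your distance-$\le 2k$ variant does give two rounds with $2k\log n+O(1)$ bits per message. It therefore recovers the final consequence, but only as a weaker substitute for the proposition as stated.
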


\begin{proof}
Alice sends $\kappa(x)$.  Consider the vertices $z$ at distance $k$ from
$y$ with $\kappa(z)=\kappa(x)$.  The geodesics from $y$ to all such $z$
begin with the same edge.  Indeed, the girth assumption makes two such
geodesics which leave $y$ in different directions disjoint away from $y$;
together they would form a path of length $2k$ whose endpoints have the same
color.  Bob therefore knows his required neighbor and outputs it.  He sends
$\kappa(y)$, and the same argument lets Alice output her required neighbor.

To obtain the coloring, join two vertices when they are endpoints of a
length-$2k$ path.  This graph has maximum degree at most
$n(n-1)^{2k-1}$, the number of nonbacktracking walks of length $2k$ from
one vertex, and hence has a proper coloring with the stated number of
colors.
\end{proof}

Finite regular graphs of any prescribed degree and girth exist
\cite{Sachs1963}.  In the conjecture below, $n$ denotes the degree and the
order $|V(G)|$ is unrestricted.  Proposition~\ref{prop:graph-coloring-protocol}
shows that this does not introduce an artificial encoding cost: the
two-round upper bound depends only on the radius-$2k$ geometry and is
$O(k\log n)$ bits per round.

\begin{conjecture}[High-girth Distance-$k$ Conjecture]
\label{conj:high-girth-distance}
For every $d\ge2$ and $k\ge1$, and all sufficiently large $n$, there is a
finite connected $n$-regular graph $G$ with $\girth(G)>2k$ for which every
$d$-round protocol for the graph Distance-$k$ relation has maximum message
length $\Omega_d(k\log n)$.
\end{conjecture}

The degree $n$ takes the place of the cube dimension, while the message
alphabet takes the place of fan-in.  In this sense,
Conjecture~\ref{conj:high-girth-distance} is a graph-theoretic version of
$\PARITY\notin\ACzero$.  The formally stronger Chopping version asks, for
the same graphs, that
$\Chop^{G,=k}_{d,m}(V(G),V(G))$ imply $m\ge n^{\Omega_d(k)}$.

\subsection{The infinite regular tree}

Finiteness is essential in
Conjecture~\ref{conj:high-girth-distance}.  The rooted infinite tree behaves
differently: a public root supplies a parent relation, and with this extra
structure the Distance-$k$ relation has a protocol with only logarithmic
dependence on $k$.  Let $T_n$ be the infinite $n$-regular tree, where
$n\ge2$.  Fix a public root $o$ and a public bijection
$\operatorname{port}_v:N_{T_n}(v)\to[n]$ at each vertex $v$.

\begin{theorem}[Infinite-tree upper bound]
\label{thm:infinite-tree-upper}
For every $k\ge1$, the Distance-$k$ relation on $T_n$ has a deterministic
three-round protocol of prefix-free total cost at most
\[
 \lceil\log n\rceil+\lceil\log(2k+1)\rceil
   +\lceil\log(k+1)\rceil
 =\log n+O(\log k).
\]
Along every transcript, at most one edge port is sent.
\end{theorem}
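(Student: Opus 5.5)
The plan is to exploit the parent relation given by the public root $o$. Write $\mathrm{dep}(v)=\dist(o,v)$. For $x,y$ at distance $k$ in $T_n$, let $w$ be the vertex of the unique $x$--$y$ path closest to $o$ (their meeting point), and put $a=\dist(x,w)$ and $b=\dist(y,w)$. Then $a+b=k$ and $\mathrm{dep}(x)-\mathrm{dep}(y)=a-b\in[-k,k]$.

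The required outputs have a simple form. If $a>0$, Alice's required neighbor is the parent of $x$. If $a=0$, then $x=w$ is the ancestor of $y$ at distance $k$, and Alice's required neighbor is the child of $x$ on the path to $y$, i.e.\ the ancestor of $y$ at distance $k-1$. Symmetrically, if $b>0$ Bob outputs the parent of $y$, and if $b=0$ he needs the child of $y$ toward $x$. Since $k\ge1$, the cases $a=0$ and $b=0$ cannot both occur. This is why at most one port will ever be sent.

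The protocol runs in three rounds.
\begin{enumerate}[(1)]
\item Alice sends $\mathrm{dep}(x)\bmod(2k+1)$, using $\lceil\log(2k+1)\rceil$ bits. Bob knows $\mathrm{dep}(y)$, and $\mathrm{dep}(x)-\mathrm{dep}(y)$ lies in an interval of $2k+1$ consecutive integers. So Bob recovers $a-b$ exactly, and with $a+b=k$ he recovers both $a$ and $b$.
\item Bob sends $a\in\{0,\ldots,k\}$, using $\lceil\log(k+1)\rceil$ bits. If $a=0$, he appends $\operatorname{port}_x$ of the ancestor of $y$ at distance $k-1$, using $\lceil\log n\rceil$ bits. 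Bob can compute this because in that case $x$ is his own ancestor at distance $k$, so he knows both vertices and the public port map at $x$. Whether the port is appended is determined by the value of $a$ just sent, so the messages are prefix-free.
\item If $a=k$ (equivalently $b=0$), Alice sends $\operatorname{port}_y$ of her ancestor at distance $k-1$. She knows $y$ as her ancestor at distance $k$. Otherwise round 3 is empty, which is again determined by the transcript.
\end{enumerate}

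Outputs are then immediate. Alice outputs the parent of $x$ if $a>0$, and otherwise the neighbor named by the received port. Bob outputs the parent of $y$ if $b>0$, and otherwise the neighbor named by Alice's port.

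Correctness reduces to the geodesic description above: on the tree, the neighbor of $x$ toward $y$ is the parent of $x$ exactly when $x\ne w$. Cost: at most one of the two port messages occurs, so the total is at most $\lceil\log(2k+1)\rceil+\lceil\log(k+1)\rceil+\lceil\log n\rceil$.

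The only step needing care is the first. I must check that the residue modulo $2k+1$ determines the depth difference, which holds because the difference ranges over $2k+1$ consecutive values. I must also check that vertices near the root cause no trouble, which holds since $\mathrm{dep}(x)\ge a$ and $\mathrm{dep}(y)\ge b$ automatically. Everything else is bookkeeping on prefix-freeness.
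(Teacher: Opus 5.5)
Your proposal is correct and follows the paper's proof essentially step for step. Both send the depth residue modulo $2k+1$, then Bob's value of the distance to the meeting point with a port appended when it is zero, then Alice's conditional port in the third round, with parents as the default outputs. The only thing the paper adds is one sentence fixing default values of the prescribed length on off-promise inputs (e.g., inconsistent parity, or a missing $k$th ancestor), so that the prefix-free protocol tree is well defined.
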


\begin{proof}
For $v\in V(T_n)$, let $h(v)=\dist_{T_n}(o,v)$.  Suppose that Alice and Bob
receive $x$ and $y$ at distance $k$, and let $w$ be the last common vertex
of the $o$--$x$ and $o$--$y$ paths.  Put $r=\dist_{T_n}(x,w)$ and
$s=\dist_{T_n}(y,w)$.  Thus $r+s=k$ and $h(y)-h(x)=s-r$.

Alice first sends $h(x)$ modulo $2k+1$.  Since $|h(y)-h(x)|\le k$, Bob
recovers $\delta=h(y)-h(x)$ as the unique integer in $[-k,k]$ compatible
with this residue.  He computes $r=\frac{k-\delta}{2}$ and $s=k-r$, and sends
$r\in\{0,\ldots,k\}$ to Alice.  If $r=0$, then $x$ is the $k$th ancestor
of $y$.  Bob can therefore determine $x$ from his own input, and he appends
the port at $x$ of the first edge toward $y$.

Alice now knows $r$ and $s$.  If $s=0$, then $y$ is the $k$th ancestor of
$x$, so Alice sends the port at $y$ of the first edge toward $x$.  In every
other case her third message is empty.  If $r>0$, Alice's required neighbor
is the parent of $x$; if $s>0$, Bob's required neighbor is the parent of
$y$.  The two exceptional cases $r=0$ and $s=0$ are handled by the port
sent by the other player.  Since $r+s=k>0$, at most one exception occurs.

The residue and the value of $r$ use fixed-length encodings.  Bob's possible
messages are the codeword for $r>0$, or the codeword for $r=0$ followed by a
fixed-length port.  These messages are prefix-free.  Whether Alice's last
message is empty or a fixed-length port is determined by the transcript.
On an off-promise input, any undefined quantity or ancestor is replaced by a
fixed default of the prescribed length.  Such branches do not occur on
promised inputs.  This gives the claimed total cost.
\end{proof}

\begin{remark}
\label{rmk:infinitary-chopping}
Note that this protocol in $T_n$ uses the common ancestor and parent relations
supplied by the distinguished root.  In a finite high-girth graph, the
relevant geodesic need not lie in any fixed spanning tree, so the preceding
reconstruction from heights fails.  The theorem therefore gives no upper
bound for the finite graphs in
Conjecture~\ref{conj:high-girth-distance}.

One may nevertheless formulate an automorphism-invariant Chopping Game on
$T_n$.  A position may be represented by an automorphism-invariant random
vertex marking arising as a local limit of finite marked regular graphs.
Conditional probabilities at a random root replace relative cardinalities,
and legal moves must commute with rerooting.  Such a formulation has
no globally distinguished vertex or parent map and therefore excludes the
protocol above.
\end{remark}

\section{Future Directions}
\label{sec:future-directions}

The main open problem is the Distance-$k$ KW Conjecture
(Conjecture~\ref{conj:distance-k}): prove $m=n^{\omega(1)}$ in the regime
$d\ll k\ll n$.
Conjecture~\ref{conj:distance-k-chopping} makes the same qualitative claim
for the more permissive Distance-$k$ Chopping Game.  The latter is a qualitative
strengthening of $\PARITY\notin\ACzero$, rather than merely another
formulation of the usual circuit lower bound.

The results of Section~\ref{sec:linear-distance-k} give exponents growing
with $k$ for linear protocols and affine Spoiler strategies.  Both arguments
use the vector-space structure of the cube: linear messages become batches
of queries in $\FtwoELEMX$, while affine chops preserve two sets related by
an unknown translation.  The main open problems concern unrestricted
protocols and Chopping strategies; on the Chopping side, this means removing
the affine hypothesis from our lower bound.  It would already suffice to
prove $m\ge n^{\gamma_d(k)}$ for some $\gamma_d(k)\to\infty$.  The
approximation argument of Section~\ref{sec:top-down-approximation} does not
give such a bound in the fixed-$k$ regime.

The high-girth relation removes the coordinate labels and vector-space
structure of the cube, retaining finite branching and a unique short
geodesic.  It is not a formal reduction of the cube conjectures, but it
provides a simplified setting for developing local lower-bound methods.
Proposition~\ref{prop:graph-coloring-protocol} identifies the natural
$k\log n$ scale, while Theorem~\ref{thm:infinite-tree-upper} gives a much
cheaper protocol on the rooted, port-labeled infinite tree.  As explained
there, its reconstruction from the parent relation does not transfer to a
finite high-girth graph.  A method which proves the high-girth conjecture and
also applies to the cube would give the kind of new lower-bound technique
sought here.

\ifnum\submissionmode=0
\subsection*{Acknowledgments}
G. K. is supported by the Templeton Foundation, Inc.\ (funder DOI: 501100011730) under the grant DOI:10.54224/20650
no.\ 20650 on Building Diverse Intelligences through
Compositionality and Mechanism Design, and additionally by the J. Grochow and R. Frongillo start-up funds at the University of Colorado Boulder.
The authors thank Josh Grochow for helpful conversations throughout all stages of this project.

A substantial portion of this work was carried out during a visit of G. K.\ to Duke University in Fall 2025.
\fi

\subsection*{AI disclosure}

All definitions, results, and conjectures in the paper were developed, proved, and formulated by the author(s).  AI tools were used for copy-editing tasks only.

\begingroup
\renewcommand{\H}[1]{\accent"7D #1}
\newcommand{\etalchar}[1]{$^{#1}$}

\endgroup


\begin{thebibliography}{ABC{\etalchar{+}}13}

\bibitem[Ajt83]{Ajtai1983}
Mikl{\'o}s Ajtai.
\newblock $\Sigma^1_1$-formulae on finite structures.
\newblock {\em Annals of Pure and Applied Logic}, 24(1):1--48, 1983.

\bibitem[BBS23]{BradacBucicSudakov2023}
Domagoj Brada{\v{c}}, Matija Buci{\'c}, and Benny Sudakov.
\newblock Tur{\'a}n numbers of sunflowers.
\newblock {\em Proceedings of the American Mathematical Society},
  151(3):961--975, 2023.

\bibitem[BRC60]{BoseRayChaudhuri60}
R.~C. Bose and D.~K. Ray-Chaudhuri.
\newblock On a class of error correcting binary group codes.
\newblock {\em Information and Control}, 3(1):68--79, 1960.

\bibitem[Bra10]{Braverman2010}
Mark Braverman.
\newblock Polylogarithmic independence fools {$\ACzero$} circuits.
\newblock {\em Journal of the ACM}, 57(5):28:1--28:10, 2010.

\bibitem[CS21]{ChakrabartiStoeckl2021}
Amit Chakrabarti and Manuel Stoeckl.
\newblock The element extraction problem and the cost of determinism and
  limited adaptivity in linear queries.
\newblock Technical report, arXiv:2107.05810, 2021.

\bibitem[CGLMS23]{ChakrabortyGalLaplanteMittalSunny2023}
Sourav Chakraborty, Anna G{\'a}l, Sophie Laplante, Rajat Mittal, and Anupa
  Sunny.
\newblock Certificate games.
\newblock In {\em 14th Innovations in Theoretical Computer Science Conference
  (ITCS 2023)}, volume 251 of {\em Leibniz International Proceedings in
  Informatics (LIPIcs)}, pages 32:1--32:24, 2023.

\bibitem[Faw16]{Fawcett2016}
Joanna~B. Fawcett.
\newblock Locally triangular graphs and normal quotients of the $n$-cube.
\newblock {\em Journal of Algebraic Combinatorics}, 44(1):119--130, 2016.

\bibitem[FF85]{FranklFuredi1985}
Peter Frankl and Zolt{\'a}n F{\"u}redi.
\newblock Forbidding just one intersection.
\newblock {\em Journal of Combinatorial Theory, Series A}, 39(2):160--176,
  1985.

\bibitem[F\"ur83]{Furedi1983}
Zolt{\'a}n F{\"u}redi.
\newblock On finite set-systems whose every intersection is a kernel of a
  star.
\newblock {\em Discrete Mathematics}, 47(1):129--132, 1983.

\bibitem[FSS84]{FurstSaxeSipser1984}
Merrick~L. Furst, James~B. Saxe, and Michael Sipser.
\newblock Parity, circuits, and the polynomial-time hierarchy.
\newblock {\em Mathematical Systems Theory}, 17(1):13--27, 1984.

\bibitem[GH92]{GoldmannHastad1992}
Mikael Goldmann and Johan H{\aa}stad.
\newblock A simple lower bound for the depth of monotone circuits computing
  clique using a communication game.
\newblock {\em Information Processing Letters}, 41(4):221--226, 1992.

\bibitem[GRSS23]{GoosRiazanovSofronovaSokolov2023}
Mika G{\"o}{\"o}s, Artur Riazanov, Anastasia Sofronova, and Dmitry Sokolov.
\newblock Top-down lower bounds for depth-four circuits.
\newblock In {\em 2023 IEEE 64th Annual Symposium on Foundations of Computer
  Science (FOCS)}, pages 1048--1055. IEEE, 2023.

\bibitem[H{\aa}s86]{Hastad1986}
Johan H{\aa}stad.
\newblock Almost optimal lower bounds for small depth circuits.
\newblock In {\em Proceedings of the Eighteenth Annual ACM Symposium on Theory
  of Computing}, pages 6--20. ACM, 1986.

\bibitem[HJP95]{HastadJuknaPudlak1995}
Johan H{\aa}stad, Stasys Jukna, and Pavel Pudl{\'a}k.
\newblock Top-down lower bounds for depth-three circuits.
\newblock {\em Computational Complexity}, 5(2):99--112, 1995.

\bibitem[Hir17]{Hirahara2017}
Shuichi Hirahara.
\newblock A duality between depth-three formulas and approximation by
  depth-two.
\newblock Electronic Colloquium on Computational Complexity, Report
  TR17-092, 2017.

\bibitem[Juk97]{Jukna1997FiniteLimits}
Stasys Jukna.
\newblock Finite limits and monotone computations: The lower bounds criterion.
\newblock In {\em Proceedings of the Twelfth Annual IEEE Conference on
  Computational Complexity}, pages 302--313, 1997.

\bibitem[KW90]{KarchmerWigderson1990}
Mauricio Karchmer and Avi Wigderson.
\newblock Monotone circuits for connectivity require super-logarithmic depth.
\newblock {\em SIAM Journal on Discrete Mathematics}, 3(2):255--265, 1990.

\bibitem[KKM99]{KlavzarKoolenMulder1999}
Sandi Klav\v{z}ar, Jack Koolen, and Henry Martyn Mulder.
\newblock Graphs which locally mirror the hypercube structure.
\newblock {\em Information Processing Letters}, 71(2):87--90, 1999.

\bibitem[LMN93]{LinialMansourNisan1993}
Nathan Linial, Yishay Mansour, and Noam Nisan.
\newblock Constant depth circuits, {Fourier} transform, and learnability.
\newblock {\em Journal of the ACM}, 40(3):607--620, 1993.

\bibitem[MW19]{MW19}
Or~Meir and Avi Wigderson.
\newblock Prediction from partial information and hindsight, with application
  to circuit lower bounds.
\newblock {\em Computational Complexity}, 28(2):145--183, 2019.

\bibitem[Raz87]{Razborov1987}
Alexander~A. Razborov.
\newblock Lower bounds on the size of bounded depth circuits over a complete
  basis with logical addition.
\newblock {\em Mathematical Notes of the Academy of Sciences of the USSR},
  41(4):333--338, 1987.

\bibitem[Ros19]{Rossman2017Subspace}
Benjamin Rossman.
\newblock Subspace-invariant {$\ACzero$} formulas.
\newblock {\em Logical Methods in Computer Science}, 15(3):3:1--3:12, 2019.

\bibitem[Sac63]{Sachs1963}
Horst Sachs.
\newblock Regular graphs with given girth and restricted circuits.
\newblock {\em Journal of the London Mathematical Society},
  s1-38(1):423--429, 1963.

\bibitem[Sip84]{Sipser1984}
Michael Sipser.
\newblock A topological view of some problems in complexity theory.
\newblock In {\em Mathematical Foundations of Computer Science 1984}, volume
  176 of {\em Lecture Notes in Computer Science}, pages 567--572. Springer,
  1984.

\bibitem[Smo87]{Smolensky1987}
Roman Smolensky.
\newblock Algebraic methods in the theory of lower bounds for boolean circuit
  complexity.
\newblock In {\em Proceedings of the Nineteenth Annual ACM Symposium on Theory
  of Computing}, pages 77--82. ACM, 1987.

\bibitem[TZ97]{TardosZwick1997}
G{\'a}bor Tardos and Uri Zwick.
\newblock The communication complexity of the universal relation.
\newblock In {\em Proceedings of the Twelfth Annual IEEE Conference on
  Computational Complexity}, pages 247--259, 1997.

\bibitem[Yao85]{Yao1985}
Andrew Chi-Chih Yao.
\newblock Separating the polynomial-time hierarchy by oracles.
\newblock In {\em Proceedings of the 26th Annual IEEE Symposium on Foundations
  of Computer Science}, pages 1--10, 1985.

\end{thebibliography}
\end{document}